\let\oldnl\nl%
\newcommand\nonl{%
	\renewcommand{\nl}{\let\nl\oldnl}}%
\newtheorem{thm}{Theorem}
\begin{document}
\title{Efficient Large-Scale Multiple Migration Planning and Scheduling in \\SDN-enabled Edge Computing}

\author{TianZhang~He,
	Adel N. Toosi,~\IEEEmembership{Member,~IEEE,}
	Rajkumar~Buyya,~\IEEEmembership{Fellow,~IEEE}%
	\IEEEcompsocitemizethanks{\IEEEcompsocthanksitem T.Z. He and R. Buyya are with the CLOUDS Lab, School of Computing and Information Systems, University of Melbourne, Australia.
		(E-mail: tianzhangh@student.unimelb.edu.au; rbuyya@unimelb.edu.au)
		\IEEEcompsocthanksitem A. N. Toosi is with the Department of Software Systems and Cybersecurity, Faculty of Information Technology, Monash University, Australia. (E-mail: adel.n.toosi@monash.edu)}%
}

\IEEEtitleabstractindextext{%
\begin{abstract}
The containerized services allocated in the mobile edge clouds bring up the opportunity for large-scale and real-time applications to have low latency responses. Meanwhile, live container migration is introduced to support dynamic resource management and users' mobility.
However, with the expansion of network topology scale and increasing migration requests, the current multiple migration planning and scheduling algorithms of cloud data centers can not suit large-scale scenarios in edge computing. The user mobility-induced live migrations in edge computing require near real-time level scheduling. 
Therefore, in this paper, through the Software-Defined Networking (SDN)  controller, the resource competitions among live migrations are modeled as a dynamic resource dependency graph. 
We propose an iterative Maximal Independent Set (MIS)-based multiple migration planning and scheduling algorithm.
Using real-world mobility traces of taxis and telecom base station coordinates, the evaluation results indicate that our solution can efficiently schedule multiple live container migrations in large-scale edge computing environments. It improves the processing time by 3000 times compared with the state-of-the-art migration planning algorithm in clouds while providing guaranteed migration performance for time-critical migrations.
\end{abstract}

}

\maketitle

\IEEEdisplaynontitleabstractindextext

\IEEEpeerreviewmaketitle

\IEEEraisesectionheading{\section{Introduction}\label{sec:introduction}}
The introduction of edge computing \cite{satyanarayanan2017emergence} brings opportunities to improve the performance of the emerging user-oriented applications by pushing computation and intelligence to end-users, including Vehicle to Cloud (V2C), Vehicle to Vehicle (V2V), Virtual Reality (VR), Augmented Reality (AR), Artificial Intelligent (AI), or Internet of Things (IoT) applications and so forth. Driven by container virtualization, microservices are more suitable for dynamic deployment on edge computing \cite{machen2017live,doan2019containers} due to smaller memory footprint and faster startup. By allocating the containerized services in the Edge Data Centers (EDCs) or Mobile Edge Clouds (MECs) \cite{hu2015mobile}, strict end-to-end (E2E) communication delays between end-users and services can be guaranteed.

From the centralized cloud computing framework to decentralized edge computing, surveys \cite{wang2018survey,rejiba2019survey} investigated the challenges faced by the infrastructure and service providers regarding dynamic resource management and user mobility. 
By providing non-application-specific compute and memory state management, live migration is the solution to these challenges. 
Live migration of VM \cite{clark2005live} and container \cite{mirkin2008containers} through the open-source Checkpoint/Restore in Userspace (CRIU) software \cite{criu}, which had kernel support since Linux 3.11, aims to provide little or no disruption to the running service during migrating in the edge computing.  
It iteratively copies unfinished computation tasks with intermediate computation states in the memory from source to destination until the memory difference between two synchronizing instances is small enough for the stop-and-copy phase. In addition, for the container image, if the image does not exist in the destination, it can be transferred from the previous EDC or remote clouds or accessed through shared storage. 
Thus, live migration performance highly relies on the available bandwidth of the network routing connecting source to destination.

Industrial infrastructure and service providers, such as IBM, RedHat, Google, etc, have been integrating live container migration into their productions~\cite{redhat-criu,borg-criu}. Google has adopted live VM and container migration with CRIU into Borg cluster manager \cite{verma2015large,ruprecht2018vm,borg-criu} for reasons, such as, higher priority task preemption, software updates, such as kernel and firmware, or reallocating for availability or performance. It manages all compute tasks and runs on numerous container clusters each with up to tens of thousands of machines. A lower bound of 1,000,000 migrations monthly in the production fleet have been performed with 50ms median blackout~\cite{ruprecht2018vm}. Live container migration provides technical simplicity without handling state management and application-specific evictions. However, it is also identified that writing and reading to remote storage through network dominates the checkpoint/restore process and the scheduling delay is the large source of delay regarding the performance of multiple live migrations.

\begin{figure}[th]
	\centering
	\includegraphics[width=0.8\linewidth]{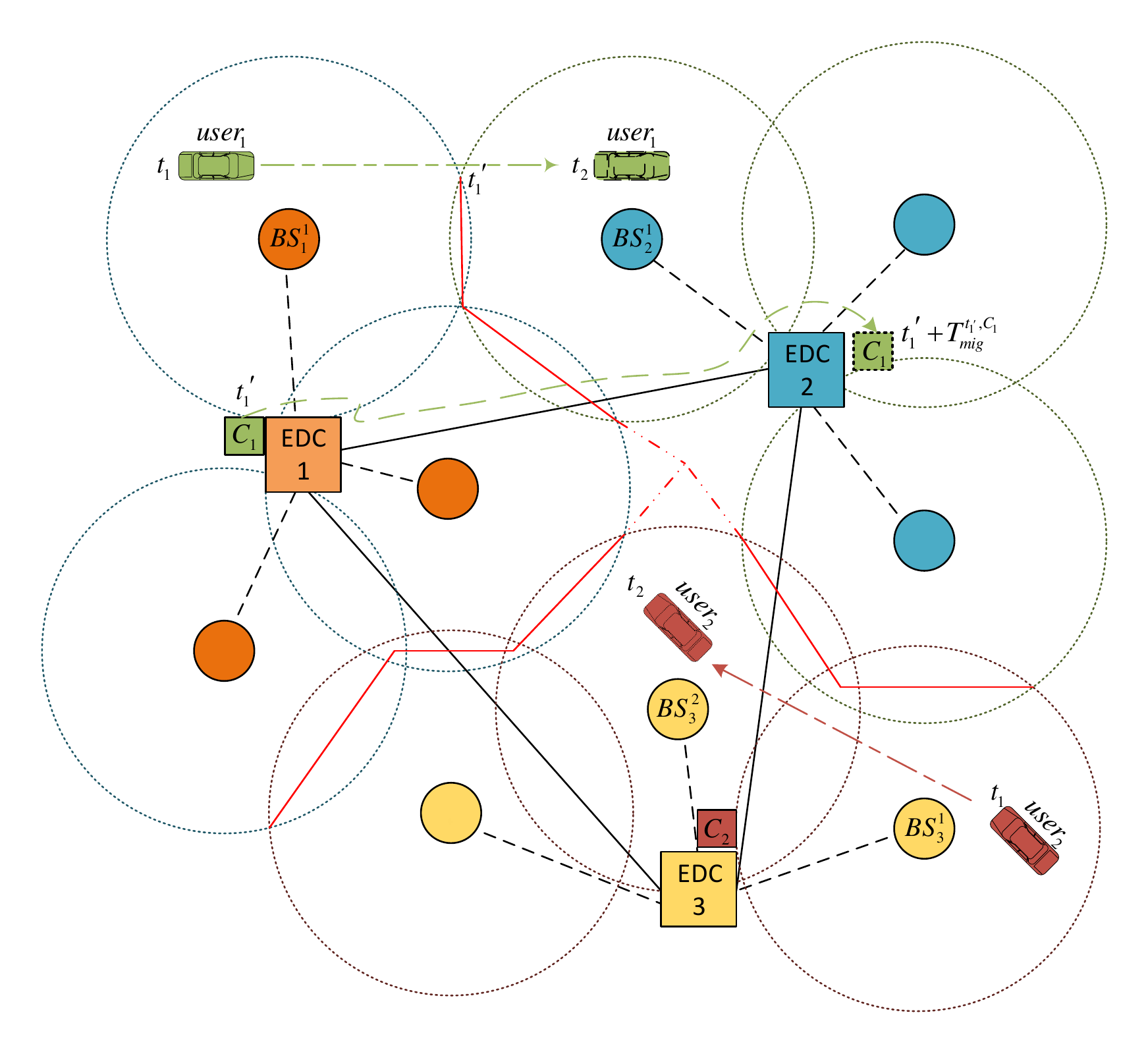}
	\caption{User-mobility induced live container migration in edge computing environments}
	\label{fig: introduction}
\end{figure}

Recently, some works have focused on user or service mobility in mobile edge computing through live container migration~\cite{wang2015dynamic,ma2017efficient, ma2018efficient,wang2018survey,rejiba2019survey,wang2019dynamic,wang2019delay}.
With the limited coverage range of each EDC, when a user moves from one base station to another, the network latency could deteriorate after the network handover. To guarantee the Quality of Service (QoS), the service may need to be migrated from the previous EDC to the proximal one through live container migration.
Figure \ref{fig: introduction} illustrates an example scenario where an autonomous vehicle sends workload to the corresponding stateful service in the EDC for real-time object detection. There is a total of 9 base stations assigned to 3 different EDCs. Two autonomous vehicles move to a new position from time $t_1$ to $t_2$. For vehicle user1, when leaving the range of base station $BS_1^1$ and entering the range of base station $BS_2^1$ at time ${t_1}^\prime$, there is a live container migration from $EDC1$ to $EDC2$ induced by the user1's movement. Meanwhile, as user2 crosses the range boundary of the base station $BS_3^1$ to $BS_3^2$, since the two base stations both belong to the same edge data center $EDC3$, the E2E delays of the service can be guaranteed. Therefore, there is no live migration induced by user2's movement.

In cloud computing environments, dynamic resource management policies triggers live migration requests periodically to optimize the resource usage or to maintain QoS of applications~\cite{beloglazov2012optimal}. However, in the edge environment, service migration/mobility is highly relative to user mobility~\cite{wang2015dynamic,ma2017efficient,wang2019dynamic,wang2019delay,rejiba2019survey}. 
Migration requests of containers or VMs from services and users may share and compete for the computing and network resources, such as migration source, destination, and network routings~\cite{he2021sla}. 
It brings more challenges for the multiple live migration planning and scheduling. 
However, most research on live service migration in edge and cloud computing neglects the actual live migration cost regarding the iterative dirty memory transmission \cite{clark2005live} and resource competition among migrations in both computing and network resources. 
As a result, performing multiple live migrations in arbitrary order can lead to service degradation \cite{bari2014cqncr, wang2017virtualold}. 

Few works focus on multiple VM migration planning and scheduling in cloud data centers \cite{bari2014cqncr, wang2017virtualold}.
The framework of migration scheduling periodically triggered by resource management policies with a long time interval is not suitable for stochastic scenarios of mobility-induced migration in edge computing. Furthermore, the network scale, the numbers of end-users and live container migration requests increase ten thousand times in edge environments. Without proper modeling, the problem complexity will increase dramatically as the number of migration requests and network scale increase. As a result, the complexity and processing time of current algorithms do not meet the real-time requirement of the live migration at scale in edge environments.

Moreover, to manage a highly distributed and dynamic network environment, Software-Defined Networking (SDN) \cite{son2017taxonomy} is introduced in edge computing, which allows dynamic configuration and operations of computer networks through centralized software controllers. 
Facilitated by OpenFlow protocol \cite{mckeown2008openflow} and Open vSwitch (OVS) \cite{pfaff2015design}, network manager based on SDN controllers 
can perform network slicing or create a separate network \cite{ordonez2017network} to minimize the influence of migration flows on other edge services.
As a result, the migration planning and scheduling algorithm can fine-grained control the network resources for the migration competition, including the network routing and available bandwidth.

Therefore, in this paper, we propose efficient large-scale live container migration planning and scheduling algorithms focusing on mobility-induced migrations in edge computing environments. It can still apply to multiple migration scheduling for the general dynamic resource management at scale.
The \textbf{contributions} of this paper are summarized as follows:
\begin{itemize}
	\item We introduce the resource dependency graph of the source-destination pair for resource competition among migration requests to reduce the problem complexity.
	\item We model the problem as finding the Maximum Independent Set of the Dependency Graph iteratively.
	\item We propose iterative-Maximal Independent Set (MIS)-based algorithms for efficient large-scale migration scheduling and prove the corresponding Thermos.
	\item We implement an event-driven simulator to evaluate the user mobility and live container migrations. The experiments are conducted with real-world dataset and traces.
\end{itemize}

The rest paper is organized as follows. We review the related work in Section \ref{section: related}  and define the system architecture in Section \ref{section: system}. In Section \ref{section: problem-model}, we analyze and model the problem of multiple container migration scheduling. Then we propose two main methods of large-scale migration scheduling in Section \ref{section: multi-scheduling}. Section \ref{section: evaluation} shows the performance analysis of proposed algorithms and Section \ref{section: experiment} shows the experimental design and evaluation with real-world dataset. Finally, we conclude the paper in Section \ref{section: conclusion}.

\section{Related Work} \label{section: related}
\begin{table*}[t]
	\caption{Comparisons of multiple migration planning and scheduling works}
	\label{tb: relate-works}
	\centering
	\resizebox{\linewidth}{!}{%
		\begin{tabular}{|l|ll|llll|ll|}
			\hline
			research		&	real-time planning	&	large-scale		&	service correlations 	& user-mobility 	& deadline  & SDN-enabled		&	Cloud DC		&  Edge computing		\\
			\hline
			CQNCR \cite{bari2014cqncr}		&	$-$	&	$-$		&	$\checkmark$		&	$-$		&	$-$						&		$-$			& $\checkmark$	&	$-$ \\
			FPTAS \cite{wang2017virtualold} 	&	$-$	&	$-$		&	$-$					&	$-$		&	$-$						&		$\checkmark$& $\checkmark$	&	$-$ \\
			Our work						&	$\checkmark$	&	$\checkmark$&	$\checkmark$			&	$\checkmark$		&	$\checkmark$&		$\checkmark$& $\checkmark$	&	$\checkmark$ \\
			\hline
		\end{tabular}
	}
\end{table*}

The live VM migration realization \cite{medina2014survey} and its application in cloud data centers, such as dynamic resource management \cite{mishra2012dynamic,xu2017survey}, have been matured last few years. The research on live container migration in edge computing is an active field \cite{wang2018survey,rejiba2019survey}.
Clark et al. \cite{clark2005live} proposed the live VM migrations and discussed the details of pre-copy or iterative live migration. He et al. \cite{he2019performance} evaluated the performance of live VM migrations and its overheads on the migration services in SDN-enabled cloud data centers.
On the other hand, the research on live container migration is trending and becomes more mature in recent years. 
Mirkin et al. \cite{mirkin2008containers} represented the checkpointing and restart features for the OpenVZ container. The checkpointing function is also used for live migration.
Checkpoint/Restore In Userspace (CRIU) \cite{criu} is a Linux software to migrate container's in-memory state in userspace. It is currently integrated with LXC, Docker (runC), and OpenVZ to achieve the live container migration. 
Nadgowda et al. also proposed \cite{nadgowda2017voyager} a CRIU-based memory migration together with
the data federation capabilities of union mounts to minimize migration downtime. 
Similarly, Ma et al. \cite{ma2017efficient, ma2018efficient} utilized the layered storage feature based on AUFS storage drive and implemented a prototype system to improve the performance of docker container migration.
Furthermore, several works studied the performance difference between container and VM live migration \cite{machen2017live, doan2019containers}. Results show that the live container migration is much faster than the live VM migration due to its much smaller memory footprint and fast startup features.

More research recently focuses on dynamic resource scheduling in fog and edge computing environments based on the live container migration (details in survey \cite{wang2018survey, rejiba2019survey}).
In \cite{wang2015dynamic, wang2019dynamic}, the authors modeled the sequential decision making
problem of generating service migration requests using the distance-based Markov Decision Process (MDP) framework. By reducing the state space, they proposed a distance-based MDP to get the approximated results.
The research \cite{wang2019delay} also investigated the same problem by using the MDP framework. The authors proposed a reinforcement learning-based online microservice coordination algorithm to learn the optimal strategy for live migration requests to maintain the QoS in end-to-end delay.

Few studies focus on the optimization of the multiple live VM migration planning in cloud data centers \cite{bari2014cqncr, wang2017virtualold}. Bari et al. \cite{bari2014cqncr} investigated the multiple VM migration planning in one data center environment by considering the available bandwidth and the migration effects on network traffic reallocation. 
The authors proposed a heuristic migration grouping algorithm (CQNCR) by setting the group start time only based on the prediction model. Without an on-line scheduler, the estimated start time of a live migration can lead to an unacceptable migration performance and QoS degradations. Moreover, the work neglects the cost of the individual migration by only comparing the migration group cost.
Without considering the connectivity between VMs and the change of bandwidth, Wang et al. \cite{wang2017virtualold} simplified the problem by maximizing the net transmission rate rather than minimizing the total migration time and proposed a polynomial-time approximation algorithm by omitting certain variables. However, the solution (FPTAS) can create migration competing on the same network path which degrades the migration performance in both average individual migration time and the total migration time.

However, as shown in Table \ref{tb: relate-works}, current algorithms can not meet the requirement of live container migrations in edge computing.
The framework of migration scheduling \cite{bari2014cqncr} which are periodically triggered by resource management policies with a long time interval is not suitable for the mobility-induced migration scenario.
Furthermore, by modeling and calculating every resource competition of migration directly, the problem complexity \cite{bari2014cqncr, wang2017virtualold} increases along with the migration request number which is not suitable for large-scale situation. The running time of migration planning is also too large to schedule time-critical live container migrations.
The algorithms \cite{bari2014cqncr, wang2017virtualold} do not consider the deadline or urgency (priority) of migration. In addition, without an on-line scheduler, the start time of a migration schedule is only based on the estimated migration time which can lead to migration performance and QoS degradation.

\section{System Architecture} \label{section: system}
In the edge computing, there is no dedicated network for the live migration to support the user mobility compared with the traditional setups in cloud data centers \cite{tsakalozos2017live}. By integrating the Software-Defined Networking (SDN) into edge computing, the centralized SDN controller can dynamically separate network resources from the service network \cite{son2017taxonomy} to build a virtual WAN network for live migrations. The available bandwidth and network routing are dynamically allocated based on the reserved bandwidth of the service network. This solution alleviates the overheads of live migration on other services and guarantees the performance of multiple live migrations. To achieve a fine-grained live migration scheduling, the migration scheduling service is integrated with the SDN controller~\cite{he2019performance}, such as OpenDayLight (ODL), Open Network Operating System (ONOS) and Ryu, and container management and orchestration module, such as Kubernetes and Docker Swarm, to control both network and computing resources during each migration lifecycle.

\subsection{Migration Lifecycle} \label{section: mig-lifecycle}
\begin{figure}[ht]
	\centering
	\includegraphics[width=0.8\linewidth]{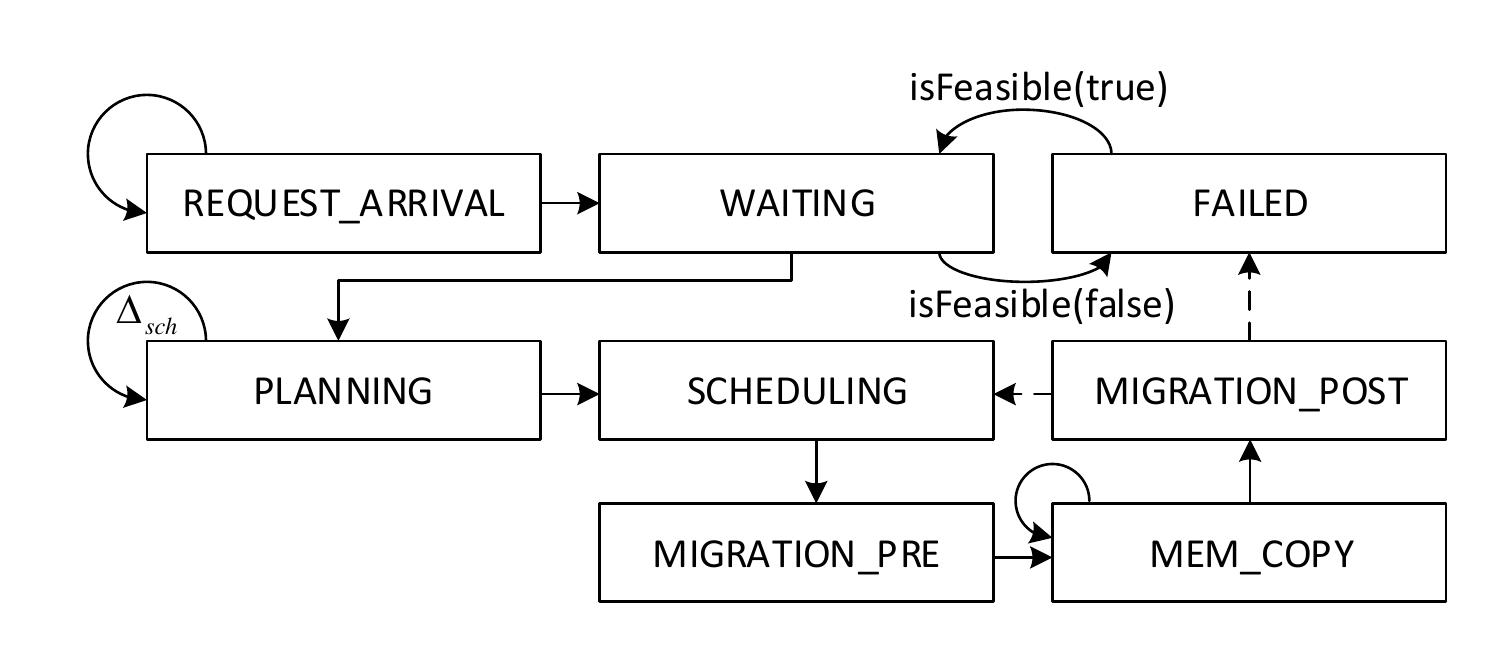}
	\caption{Lifecycle for live container migrations}
	\label{fig: life-cycle}
\end{figure}
In this section, we introduce the framework of migration scheduling in edge computing.
Compared with periodically arrived multiple live migrations in cloud data centers, the arrival of live container migration induced by user mobility is stochastic. Therefore, we design the scheduling framework for the planning and scheduling of live container migration in edge computing with stochastic environments. As shown in Fig. \ref{fig: life-cycle}, when a migration request arrives, it enters the WAITING state if it is feasible for scheduling, which means the container is not in migration. Otherwise, it will enter into the FAILED waiting migration list of the corresponding container. 
The migration planning event is triggered periodically within a short interval (such as every 1 second). It will generate the migration scheduling plan according to both waiting and running migrations. Based on the migration plan, the SDN-enabled on-line scheduler starts the migration with the allocated bandwidth and routing. Then, the live container migration will start the pre-migration phase to extract the container procedure tree. This will trace the dirty memory in the userspace of the source server and create an empty container instance in the destination for state synchronization \cite{criu}. In MEM\_COPY, the dirty memory is transferred iteratively to the synchronizing instance in the destination. In the post-migration phase, the network communication of the migrated service will be redirected to the new instance in the destination. Then, the migrated container will recover at the destination. It will also trigger the start of subsequent resource-dependent migrations in the plan and change the feasibility flag of the first migration request of the same container in the FAILED migration waiting list.

\section{Motivations and Problem Formulation} \label{section: problem-model}
In this section, we first present the performance model of single live migration. Then, we analyze the challenges faced by multiple live container migrations scheduling in edge computing: resource competition or dependency and real-time planning and scheduling. Finally, we model the problem as iteratively generating the Maximal Independent Set (MIS) based on the resource dependency graph.

\subsection{Single Migration Model} \label{section: mig-model}
The performance of single live migration $T_{mig}$ can be categorized into three parts: pre-migration
computing, memory-copy networking, and post-migration computing overheads, i.e., ${T_{mig}} = {T_{pre}} + {T_{mem}} + {T_{post}}$.
Due to the smaller footprint and fast start up of containers compared to VMs, the pre-migration and post-migration is much shorter \cite{nadgowda2017voyager}.
Based on the iterative pre-copy container migration implemented in CRIU \cite{criu}, the
migration performance in terms of memory-copy can be
represented as \cite{he2019performance}:
\begin{equation} \label{eq: mig-time}
{T_{mem}} = \frac{{\rho  \cdot {Mem}}}{L} \cdot \frac{{1 - {\sigma ^{i + 1}}}}{{1 - \sigma }}
\end{equation}
where the ratio $\sigma  = \rho  \cdot {R \mathord{\left/ {\vphantom {R L}} \right. \kern-\nulldelimiterspace} L}$, $\rho$ is the data compression rate of dirty memory, $Mem$ is amount of kernel memory the container uses, $L$ is allocated available bandwidth, $R$ is dirty page rate which is the memory difference in pagemap per second compared to the previous copy iteration, $i$ is the total migration round. 

We consider three conditions to enter the stop-and-copy phase: (1) reach the threshold of memory copy iteration; (2) the transmission time of remained memory difference is less than the downtime threshold; and (3) the allocated bandwidth is less than the dirty page rate.
The overhead of disk transmission for the container data and image can be ignored when shared network storage is available. The total iteration rounds of memory copy can be represented as:
\begin{equation}\label{eq:round-number}
i = \min \left( {\left\lceil {{{\log }_\sigma }\frac{{{V_{thd}}}}{Mem}} \right\rceil ,\Theta } \right)
\end{equation}
where $\Theta$ denotes the maximum allowed number of iteration rounds. The $\Theta = 0$ when the dirty page rate is larger than the allocated bandwidth at the start of migration. $V_{thd} = T_{dthd} \cdot L$ is the remaining dirty pages need to be transferred in the stop-and-copy phase, and $T_{dthd}$ is the configured downtime threshold.

\subsection{Resource Competition}
\begin{figure}[t]
	\centering
	\begin{subfigure}[b]{.45\linewidth}
		\centering
		\includegraphics[width=\linewidth]{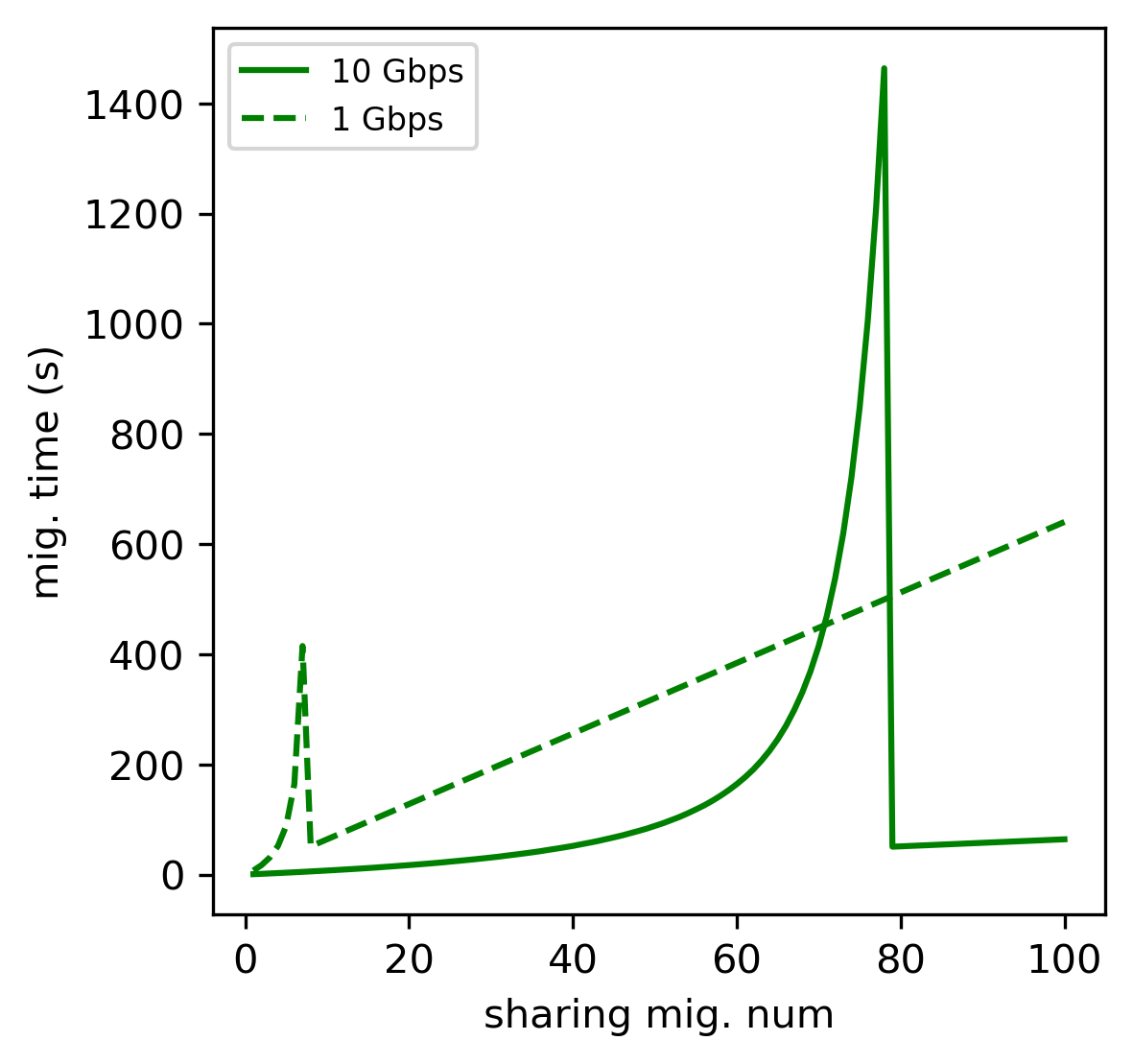}
		\caption{Average migration time}
		\label{fig: netshare1}
	\end{subfigure}%
	\hspace{0.1em}
	\begin{subfigure}[b]{.5\linewidth}
		\centering
		\includegraphics[width=\linewidth]{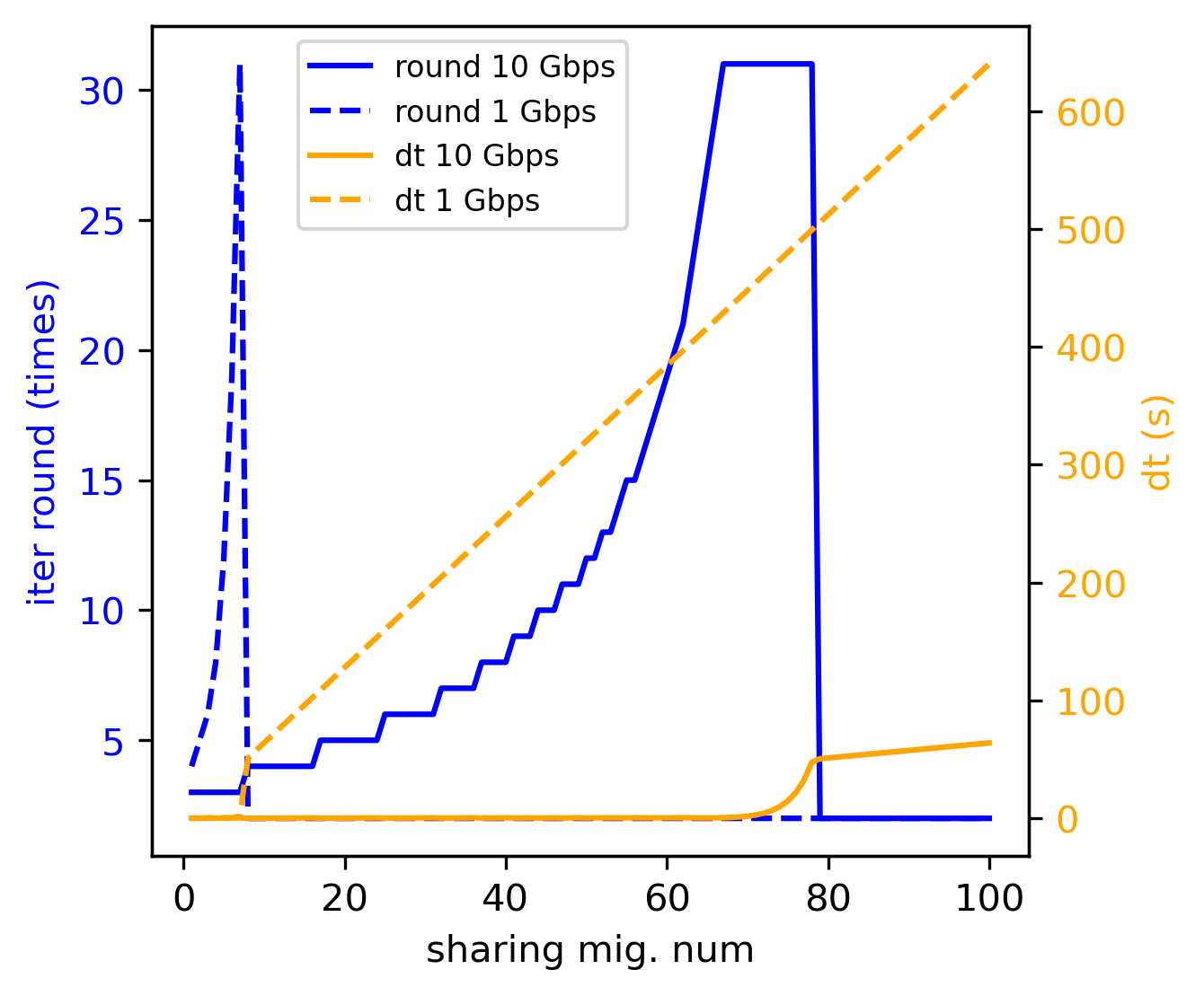}
		\caption{Iterations and downtime(dt)}
		\label{fig: netshare2}
	\end{subfigure}
	\caption{Migration performance against the number of migration sharing network bandwidth}
\end{figure}
We first explain the network sharing competition overheads in multiple migration scheduling.
Two migrations may share the same source, destination, or part of network routings. Therefore, performing multiple live migrations in arbitrary order can lead to service degradation and unacceptable migration performance \cite{bari2014cqncr, wang2017virtualold}.
A smaller bandwidth during the live migration means a longer migration time and more dirty pages need to transfer in order to limit the state difference between two instances for the last stop-and-copy phases which contributes as the downtime. Thus, the sum of the individual migration time of several live migrations is less than the total live migration time \cite{he2019performance}. For example, based on the live migration model, Fig. \ref{fig: netshare1} and \ref{fig: netshare2} show the situation when several identical migrations sharing the same network path. The container's initial memory size is 1 GB with a 20 MB/s dirty page rate. The downtime and iteration threshold is configured at 0.5 seconds and 30 times, respectively. In this example, for the sake of a clear comparison between the sum of individual migration time and the total migration time, we start all migrations at the same time. In this case, the average migration time as shown also equals the total migration time.

The average execution time of live migrations scheduled sequentially with 10 Gbps and 1 Gbps is 0.8482 and 7.5241 seconds. The average downtime is 0.0082 and 0.1048 seconds. The iteration rounds are 3 and 4, respectively.
However, the average migration time or total multiple migration time of 5 live migrations sharing 10 Gbps and 1 Gbps is 3.604 and 88.43 seconds. The average downtime is 0.2048 and 0.3689 seconds with 3 and 12 iterations, respectively.
As the number of migrations increases (Fig. \ref{fig: netshare1}), the allocated bandwidth decreases linearly. However, to achieve the required migration downtime, the average migration time will increase exponentially. At 7 and 80 migrations sharing of 1 Gbps and 10 Gbps respectively, the iteration rounds reach the threshold as 30 (Fig. \ref{fig: netshare2}). Then, with more bandwidth-sharing migrations, the dirty page rate is larger than the allocated migration bandwidth. The downtime exceeds the 0.5 seconds threshold and increases significantly from 0.78 to 64.0 seconds and from 1.85 to 640 seconds.
For the time-critical live migrations, a longer migration time will increase the possibility of migration deadline violation and QoS degradation. Therefore, it is optimal to sequentially schedule the resource-dependent migrations while concurrently schedule the independent ones. If there is a set of independent migrations and no other resource-dependent migrations are running, we can start all migration in such a concurrent scheduling group. The objective of migration scheduling is to maximize the number of migrations that can be scheduled concurrently.

\begin{figure}[t]
	\centering
	\includegraphics[width=0.6\linewidth]{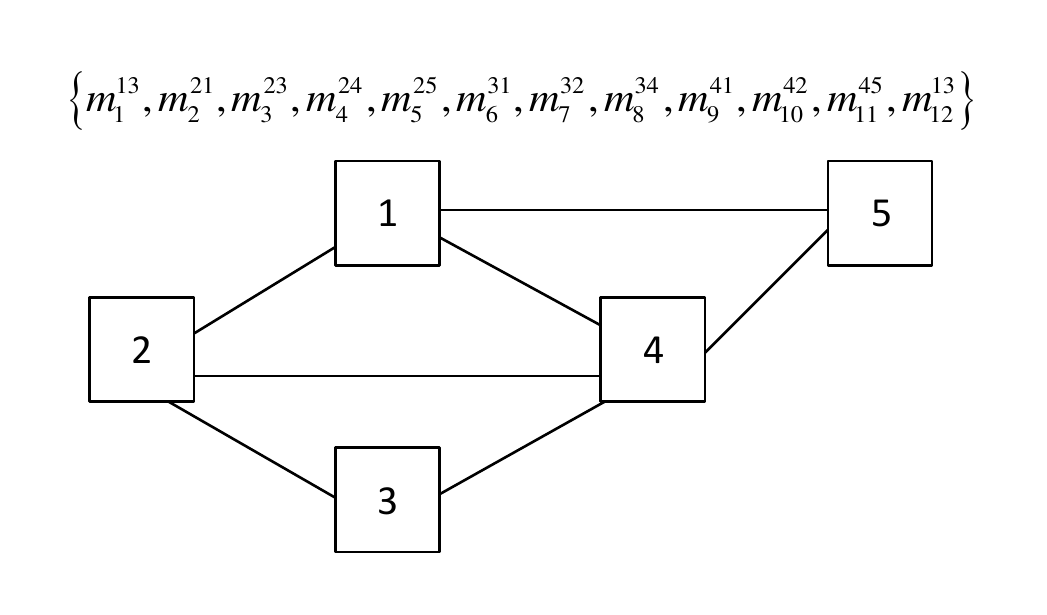}
	\caption{Example live migration requests and the network topology with 5 edge data centers}
	\label{fig: example-topo}
\end{figure}

Figure \ref{fig: example-topo} shows an illustrative example with twelve live migrations requests on the edge network topology of 5 total EDCs. Let $m^{sd}_i$ denote the migration request that migrating container $i$ from EDC $s$ to EDC $d$. For the sake of a concise example, we limit the network interfaces used by the migration traffic. In other words, migration traffics share the same interfaces when the source or the destination is the same. It can be easily extended to the set of network interfaces in source $\{s\}$ and destination servers $\{d\}$ and the corresponding network paths $\{p\}$. The network routing policy considers the shortest network path with the minimal number of migration flows.
For example, there are two network routes between EDC1 and EDC3. As there is one migration from EDC2 to EDC3, it chooses network path $\{EDC1, EDC4, EDC3\}$ in this case. 

\begin{figure}[t]
	\centering
	\includegraphics[width=0.6\linewidth]{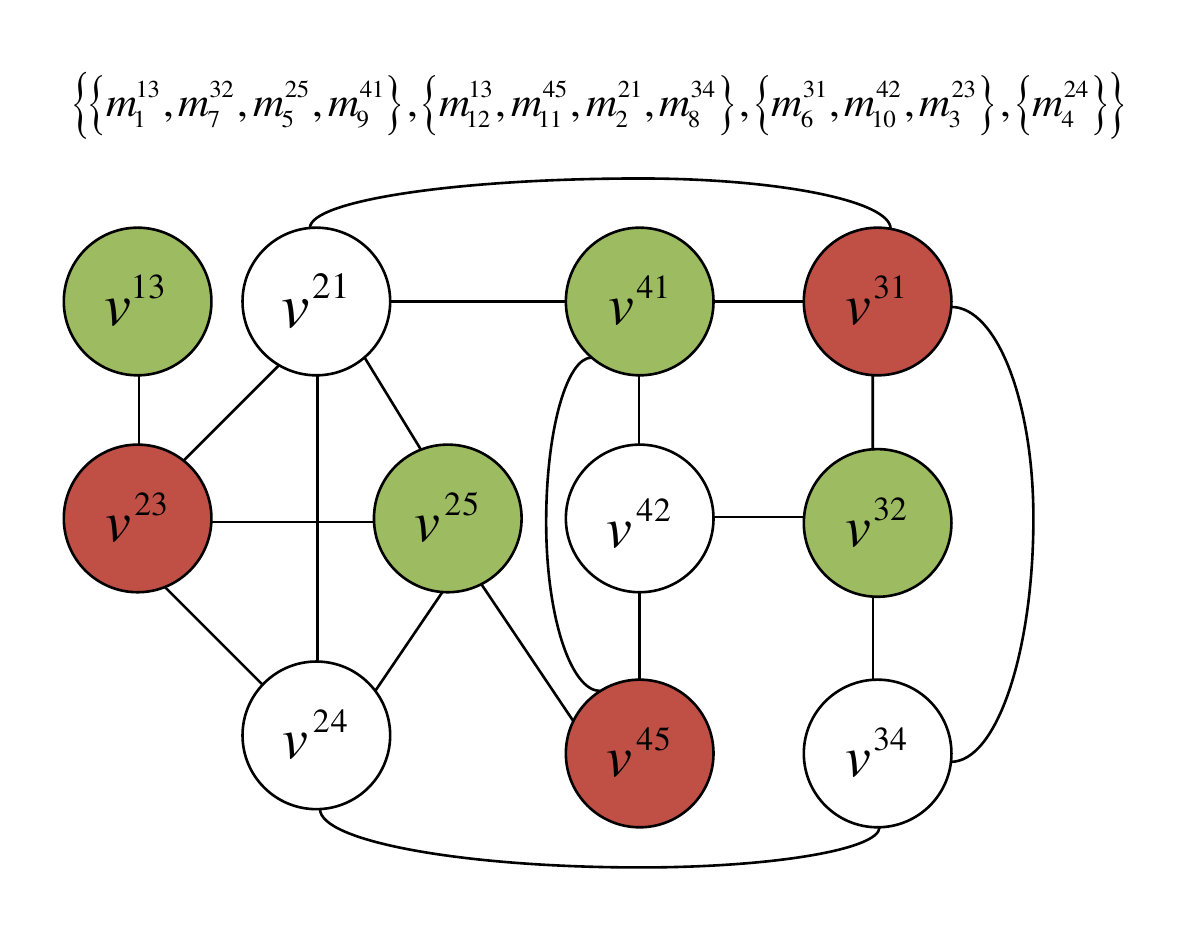}
	\caption{The resource dependency graph of example migrations, iterative maximal independent set as concurrent migration groups, and two colored possible maximal independent sets for the first iteration, and one possible concurrent migration groups}
	\label{fig: example-depgraph}
\end{figure}

Resource-independent migrations from one concurrent scheduling group can be scheduled at the same time. The planning algorithm needs to generate a scheduling plan consists of several concurrent migration groups that each group size is as large as possible. A larger concurrent scheduling group indicates that there are more migrations could be performed at the same time. As a result, the better performance of multiple migrations in total migration time and the QoS of migrating service can be guaranteed. Note that two migrations from different migration groups are not necessarily resource-dependent. As shown in Fig. \ref{fig: example-depgraph}, based on the network topology provided by the SDN controller, we create an undirected graph of resource dependency among migrations based on the source, destination, and network routing of migration requests. Each node $v^{sd}_p$ represents a list of migrations sharing the same source $s$, destination $d$, and network path $p$. In other words, migrations in one node list form a complete graph as all migrations are resource-dependent to all others in the list. For example, the migration list of node $v^{13}$ is $\{m_1^{13}, m_{12}^{13}\}$. It significantly limits the problem complexity as the number of migration requests increases. The edge of the dependency graph indicates resource competition (network interfaces at source or destination, or bandwidth sharing along network routes) between migrations. 
A concurrent group equals to an independent set of the resource dependency graph.
A maximal concurrent scheduling group is a set of resource-independent migrations that is not a subnet of any other concurrent group. In other words, there is no other migration outside the concurrent group can be added to it so that all migrations can be performed at the same time. Therefore,  it equals a maximal independent set (MIS). The largest size MIS is a maximum independent set. As shown in Fig. \ref{fig: example-depgraph}, there are several combinations of migrations for a maximal concurrent scheduling group. In the first iteration, one of the maximum group is ${\left\{ {m_1^{13},m_7^{32},m_5^{25},m_9^{41}} \right\}}$ and one of the maximal group is $\left\{ {m_3^{23},m_{11}^{45},m_6^{31}} \right\}$. Thus, the maximum group is a better choice. After selecting the migrations from the nodes of the maximal independent set, we delete these migrations and update the dependency graph. One node is deleted from the graph when there is no migrations left in its migration list. For example, after the first iteration, we only delete nodes $v^{25},v^{41},v^{32}$, because there is still one migration $m_{12}^{13}$ left in node $v^{13}$ list. Thus, it is essential to select migrations carefully to achieve the maximum size of the concurrent groups.
At the end, the on-line scheduler schedules all migration in the first group. Then, when there is one migration finishes, the scheduler starts all migrations blocked by the finished migration following the order of migration groups.

Before discussing how to get the Maximum Independent Set, the largest Maximal Independent Set (MIS), of the resource dependency graph, we first review some basic graph concepts  \cite{bron1973algorithm}, such as clique $C$ and independent set $I$. A clique is a subset of vertices of an undirected graph G such that every two distinct vertices in the subset are adjacent. The maximal clique is a clique that cannot be extended by including one more adjacent vertex. On the other hand, an independent set of a graph G is the opposite of a clique that no two nodes in the set are adjacent. The maximum clique or independent set is the maximal clique or independent set with the largest size. $\alpha \left( G \right)$ denotes the size of the largest MIS of graph $G$. Therefore, an independent set of the resource dependency graph equals a concurrent migration group. The migrations from the nodes in an independent set $I$ can be scheduled concurrently. Meanwhile, migrations from the nodes in a clique are resource-dependent which need to be scheduled sequentially.

\subsection{Real-Time Planning}
There are few multiple migration planning and scheduling algorithms for live VM migration in cloud data centers \cite{bari2014cqncr, wang2017virtualold}.
However, the processing time of the scheduling sequence of multiple live migrations based on the algorithms in cloud data centers is not suitable for the real-time requirement of mobility-induced migrations in edge computing. For example, the processing time of FPTAS \cite{wang2017virtualold} and CQNCR \cite{bari2014cqncr} for migration planning is about 5 and 10 seconds for 100 migrations. The processing time increases to 44.56 and 968.46 seconds for FPTAS and CQNCR to generate the scheduling plan of 500 migrations. For traditional dynamic resource management, the algorithm triggered every 10 minutes or 30 minutes. This leaves enough time budget for algorithms to generate the optimal scheduling sequence. However, in the edge computing environment for mobility-induce live migrations, the live migration requests arrive at any time stochastically. Most of the migration requests are also time-critical. Thus, the processing time of the planning and scheduling algorithm for mobility-induced migrations should be adapted to suit the real-time scenario.

\subsection{Problem Modeling} \label{sect: model}
The planning and scheduling algorithm is triggered periodically after every time interval ${\Delta _{sch}}$.
We let $M_{arriv}^t$ denote the set of arrival migration requests at planning time $t$. $M_{wait}^t$ is the set of migration requests waiting for planning at time $t$. $M_{fail}^t$ is the set of infeasible migrations, such as its requested container is in migration. $M_{plan}^t$ is the set of migrations that have been planned but not finished at time $t$, and $M_{finish}^t$ is the set of finished migrations at time $t$.

The input of migration requests at every migration planning time $t$ is $M_{input}^t = M_{plan}^t \cup M_{wait}^t$. For each live container migration $m_j$, we have source and destination edge data center and allocated network routing, $(s_j, d_j, p_j)$, available bandwidth $l_j$, arrival time $a_j$, estimated migration time $T_j$, relative deadline $D_j$, start time $b_j$, and finish time $f_j$. Therefore, the response time can be represented as ${r_j} = {f_j} - a{}_j$. The slack time of migration scheduling, the remaining scheduling window that one migration will not miss its deadline, is ${\tau _j} = {a_j} + {D_j} - {T_j} - t$. 
The objective of live container migration planning and scheduling is to maximize the number of running resource-independent live migrations until the next planning time $t+ \Delta_{sch}$. 

At every planning and scheduling time $t$, the resource dependency graph
$G = (V, E)$ denotes the acyclic undirected graph where $|G| = |V|$. Each node $u \in |V|$ represents the list of migrations $M(u)$ where migration shares the same source $s$, destination $d$, and network routing $p$. By sharing the same source and destination and network routing, migrations in the list of a node are all resource-dependent. Let $(u,v) \in E$ denote the edge between node $u$ and $v$. It indicates the resource dependency between migrations from both nodes. 
$V\left( G \right)$ denotes the set of nodes of graph $G$.

We model the multiple migration planning problem as generating the maximal independent set of the dependency graph iteratively. In other words, in each iteration, we get the maximal independent set of the remaining graph, then update the graph by deleting corresponding migrations.
Let ${G_{i + 1}} = {G_i}\left[ {V\left( {{G_i}} \right) - {S_i}} \right]$ represent the remained graph by directly deleting vertex from set of nodes $S_i$.
Let $I_i$ denote the maximal independent set of graph $G_i$. Then, the remained graph $G_{i+1}$ in each iteration can be represented as:
\begin{equation}
{G_{i + 1}} = {G_i}\left[\kern-0.15em\left[ {V\left( {{G_i}} \right) - {I_i}} 
\right]\kern-0.15em\right] = {G_i}\left[ {V\left( {{G_i}} \right) - {S_i}} \right]
\end{equation}
by deleting set of nodes $S_i=\{u|u\in I_i,M(u) = \emptyset\}$, where the migration list of the deleted node $u$ is empty. 
Therefore, for each migration planning, the objective is to generate the iterative maximum independent set of dependency graph:
\begin{equation}
\max \left| {{I_i}} \right|,\forall {I_i} \in \left\{ {I_{iter}^i} \right\}
\end{equation}
where $\left\{ {I_{iter}^i} \right\} = \left\{ {{I_1},{I_2},...,{I_K}} \right\}$ is the total K iterative independent sets 
and there is no vertices left in the $K+1$ remaining graph as ${G_{K + 1}} = \emptyset$.
In other words, each iterative independent set size equals the size of maximum independent set of remaining graph $\left| {{I_i}} \right| = \alpha \left( G_i \right)$.

We extend the model to generate the iterative maximum weighted independent set for migration with different priorities, such as migration deadline.
The weight of an independent set is $W\left( I \right) = \sum\nolimits_{u \in I} {W\left( u \right)}$.
The largest weight of migration $\hat{m}$ in the node migration list is the weight of its corresponding node in the dependency graph $W\left( u \right) = W\left( \hat{m} \right)$ that
\begin{equation}
W\left( \hat{m} \right) \ge W\left( {m} \right),\forall \hat{m},m \in M\left( u \right)
\end{equation}
Then, the objective of multiple migration planning can be represented as:
\begin{equation}
\max W\left( I_i \right),\forall {I_i} \in \left\{ {I_{iter}^i} \right\}
\end{equation}
The weight of node $W(u) =1$ when there is no need to differentiate migrations in different nodes.
Generating the maximum (weighted) independent set of an undirected acyclic graph is a well known NP-hard problem \cite{lawler1980generating, cazals2008note}. Therefore, generating the iterative maximum independent set as the subset is also NP-hard.

\subsection{Complexity Analysis}
Because an independent set of $G$ is a clique in the complement graph of G and vice versa, the independent set problem and the clique problem are complementary \cite{bron1973algorithm, lawler1980generating, tomita2006worst}. In other words, listing all maximal independent sets or finding the maximum independent set of a graph equals listing all maximal cliques or finding the maximum clique of its complement graph. 
Thus, in each iteration, we can equivalently find the maximum independent set by getting the maximum clique ${C_i}$ of the complement graph ${C_i}\left( {{{\bar G}_i}} \right) = {I_i}\left( {{G_i}} \right)$.

It is known that all maximal cliques can be calculated in a total time proportional to the maximum number of cliques
in an n-vertex graph \cite{tomita2006worst}. In other words, each clique is generated in a polynomial time in all maximal cliques listing \cite{lawler1980generating}. When we only consider vertex, the maximal cliques listing algorithm (CLIQUES)  \cite{tomita2006worst,cazals2008note} based on Bron-Kerbosch \cite{bron1973algorithm} is the optimal algorithm.
The worst-case running time of CLIQUES is $O( {{3^{n/3}}} )$. The upper bound of all maximal cliques or independent sets of a graph is ${{3^{n/3}}}$ \cite{moon1965cliques}.
For the problem of finding one maximum independent set, the time complexity is improved from $O( {{2^{n/3}}} )$ in \cite{tarjan1977finding} to $O( {{2^{0.276n}}})$ \cite{robson1986algorithms}. Based on the work \cite{robson1986algorithms}, the best-known time complexity is $O\left( {{2^{n/4}}} \right)$ \cite{robson2001finding}. 
Therefore, it is computationally impossible to solve the exact problem of listing all maximal cliques (maximum clique) of its complement graph ${{\bar G}_{dep}}$ or all maximal independent sets (maximum independent set) of $G_{dep}$ for the real-time live container migration scheduling in edge computing which exhibits an exponential time complexity.

\section{Migration Planning and Scheduling} \label{section: multi-scheduling}
In this section, we present the proposed planning and scheduling algorithms for large-scale live container migrations in edge computing. 
With the waiting live container migration requests and planned unfinished live migrations as the input, the migration planner needs to efficiently schedule arriving migrations while maintaining the QoS.
Based on the problem modeling in Section \ref{sect: model}, this problem is reduced to finding an MIS of the migration dependency graph iteratively.
Therefore, we propose two major approaches to generate the iterative MISs of the dependency graph: (1) Direct iterative MIS generation and (2) Maximum Cliques (MCs)-based MIS generation.

\subsection{Direct iterative-Maximal Independent Sets}
For the direct iterative MIS generation, we follow the rationals based on the planning model as follows: (1) Create dependency graph $G_{dep}$ based on the source, destination, and network routing of the input migrations and the network topology; (2) Generate the Maximum Independent Set (MIS) $I$ of $G$; (3) Delete the nodes $u \in I$ from $G$ if its migration list $M(u)$ is empty; and (4) Repeat the procedure 2 and 3 until there is no vertices left $G_{dep} = \emptyset$.

\subsubsection{The Approximation}
For the approximation algorithm (approx) of creating the iterative maximum independent set, the procedure is as follows: 
In the approximation algorithm (Algorithm \ref{alg: approx}), we use the approximating maximum independent sets algorithm by excluding subgraph \cite{boppana1992approximating} to generate MIS in each iteration. 
Note that we skip the MIS generation and remove the migrations directly if the node size of $G_{dep}$ is unchanged in the current iteration. In other words, if we need to recalculate the MIS of the remaining graph, there is at least one node removed from the graph $G_i$. Given total $m$ live container migrations, we create the corresponding dependency graph with $n$ vertices. Therefore, regardless of the total number of migration requests, the upper bound of the complexity of planning multiple migrations scheduling is limited by the involved source, destination, and network routing. In the worst case, the planning algorithm only needs at most $n$ iteration rounds to calculate the concurrent migration group.
In each iteration, it guarantees $O(n/{(\log n)^2})$ approximate maximum independent set in polynomial time \cite{boppana1992approximating}. 

\begin{algorithm}[h]
	\caption{Iterative approximation grouping}\label{alg: approx}
	\KwIn{\{$G_{dep}$\}}
	\KwResult{migGroups $\{I_{iter}\}$}
	\SetKwFunction{appmis}{\textsc{Approx\_MIS}}
	$i \leftarrow 0$; ${G_i} \leftarrow G_{dep}$; $\left\{ {{I_{iter}}} \right\} \leftarrow \emptyset$;\\
	\While{$V\left( {{G_i}} \right) \ne \emptyset $}{
		$I_i \gets \appmis(G_i)$;\\
		${G_{i + 1}} \leftarrow {G_i}\left[\kern-0.15em\left[ {V\left( {{G_i}} \right) - {I_i}} 
		\right]\kern-0.15em\right]$;\\
		$\left\{ {{I_{iter}}} \right\} \leftarrow \left\{ {{I_{iter}}} \right\} \cup {I_i}$;\\
		$i \leftarrow i + 1$;\\
	}
\end{algorithm}

Based on the newly generated scheduling plan $\{I_{iter}\}$, the SDN-enabled on-line migration scheduler will start all feasible migrations in the first group $I_0$, considering the resource dependency with current running migrations. Then, whenever a migration finishes, the scheduler starts all remaining feasible migrations in each concurrent migration group $I_i$ followed by the scheduling plan order.

\subsubsection{Greedy MIS Algorithm}
The greedy algorithm (iter-GWIN) generates the concurrent groups (MIS) of live migration iteratively.
A greedy maximal independent set algorithm (GWIN) \cite{sakai2003note} based on the weight and the degree of a node is adapted to directly generate the MIS in each iteration.
Let ${\Delta _G}$ denote the maximum degree and ${{\bar d}_G}$ is the average degree of $G$. The degree of node $u$ in $G$ is ${d_G}\left( u \right) = \left| {{N_G}(u)} \right|$.
${N_G}\left( u \right)$ is the set of neighbor nodes of vertex $u$ and $N_G^ + \left( u \right) = {N_G}\left( u \right) \cup \left\{ u \right\}$.

\begin{algorithm}
	\caption{iter-GWIN} \label{alg: iter-gwin}
	\KwIn{\{$G_{dep}$\}}
	\KwResult{migGroups $\{I_{iter}\}$}
	$i \leftarrow 0$; ${G_i} \leftarrow G_{dep}$; $\left\{ {{I_{iter}}} \right\} \leftarrow \emptyset$;\\
	\While{$V\left( {{G_i}} \right) \ne \emptyset $}{
		${I_i} \leftarrow \emptyset$; ${G_j} \leftarrow {G_i}$; $j \leftarrow 0$;\\
		
		\While{$V\left( {{G_j}} \right) \ne \emptyset $}{
			select node $\hat{u}$ in $G_j$;\\
			${I_i} \leftarrow {I_i} \cup \left\{ {\hat u} \right\}$;\\
			${G_{j + 1}} \leftarrow {G_j}\left[ {V\left( {{G_j}} \right) - N_G^ + \left( {\hat u} \right)} \right]$;\\
			$j \leftarrow j + 1;$\\
		}
		${G_{i + 1}} \leftarrow {G_i}\left[\kern-0.15em\left[ {V\left( {{G_i}} \right) - {I_i}} 
		\right]\kern-0.15em\right]$;\\
		$\left\{ {{I_{iter}}} \right\} \leftarrow \left\{ {{I_{iter}}} \right\} \cup {I_i}$;\\
		$i \leftarrow i + 1$;\\
	}
\end{algorithm} 

As shown in Algorithm \ref{alg: iter-gwin}, from line 3-8, it selects the node with largest score regarding the minimal degree and maximal weight:
\begin{equation} \label{eq: node-score}
{{W\left( u \right)} \mathord{\left/
		{\vphantom {{W\left( u \right)} {\left( {{d_{{G_i}}}\left( u \right) + 1} \right)}}} \right.
		\kern-\nulldelimiterspace} {\left( {{d_{{G_i}}}\left( u \right) + 1} \right)}}
\end{equation}
It removes the selected node and its neighbors from the graph and repeats the procedure until there is no vertices left. 

As mentioned in the problem modeling, the weighted node equals the maximum weight of migrations from its list. 
The migration weight could be arrival time, estimated migration time, or correlation network influence \cite{bari2014cqncr} after migration for non-time-critical migrations and the deadline or slack time for real-time migrations scheduling. In this paper, we consider the weight function regarding the slack time $\tau$ as follows:
\begin{equation} \label{eq: weight}
W\left( m \right) = \left\{ {\begin{array}{*{20}{c}}
	{{{10 \cdot \beta } \mathord{\left/
				{\vphantom {{10 \cdot \beta } \tau }} \right.
				\kern-\nulldelimiterspace} \tau }}\\
	{{{100 \cdot \left| \tau  \right|} \mathord{\left/
				{\vphantom {{100 \cdot \left| \tau  \right|} \beta }} \right.
				\kern-\nulldelimiterspace} \beta }}\\
	{100}
	\end{array}\begin{array}{*{20}{c}}
	{\tau  > \beta }\\
	{\tau  <  - \beta }\\
	{other}
	\end{array}} \right.
\end{equation}
where $\beta$ is the slack time threshold. We set $\beta = 1$ in this paper.
The weight of node is  $W(u) = \gamma \cdot W(m)$, where $\gamma$ is the coefficient regulator for the urgency of the scheduling migration. We set $\gamma = 1$.
Moreover, in the situation that the priorities of all migrations are the same, we only need to consider the size of MIS. The node weight is set to 1 $W(u) =1$.
In each iteration, the lower-bound of the maximum (weighted) independent set is $\sum\nolimits_{u \in V} {{{W\left( u \right)} \mathord{\left/
			{\vphantom {{W\left( u \right)} {\left( {{d_G}\left( u \right) + 1} \right)}}} \right.
			\kern-\nulldelimiterspace} {\left( {{d_G}\left( u \right) + 1} \right)}}} $ \cite{sakai2003note}.
As iteration is $n$ in the worst case,
the time complexity of iter-GWIN is $O({n^2}\log n)$ for weighted graph and $O\left( {n^2} \right)$ for unweighted graph.

\subsection{The Maximum Cliques-based Heuristics} \label{section: iter-mcs}
In this section, based on the observation of the density property of migration resource dependency graph, we propose the iterative Maximum Cliques (MCs)-based algorithm. We first discuss the rationals of the proposed algorithm.

The degeneracy of a graph G is the smallest number d such that every subgraph of G contains a vertex of degree at most d. It is a measure for the graph spareness.
For an n-vertex graph with degeneracy $d$, by introducing the sequence ordering based on degeneracy, Bron-Kerbosch
Degeneracy algorithm \cite{eppstein2010listing} can list all maximal cliques in time $O( {dn{3^{d/3}}} )$.
With a spare graph that $n \ge d + 3$, the upper bound of all maximal cliques number is $\left( {n - d} \right){3^{d/3}}$.
Figure \ref{fig: graph-density} illustrates the nodes and the density (degeneracy) of the resource dependency graph of WAN network topologies \cite{knight2011internet} and its complement.
It shows that the degeneracy of the complement graph ${{\bar G}_{dep}}$ is 4.34 times that of $G_{dep}$. For $G_{dep}$ and its complement graph, the average ratio of dependency $d$ to the total number of nodes $n$ is 0.153 and 0.714, respectively. The resource dependency graph is considerably more sparse than its complement graph. 
Therefore, for $G_{dep}$, there are much fewer maximal cliques than the total MIS. 
As a result, according to the theoretical time complexity, the running time of listing all maximal cliques or maximum clique of $G_{dep}$ is much smaller than that of listing all maximal independent sets or maximum independent set of $G_{dep}$.
Therefore, the iterative Maximum Cliques (MCs)-based heuristics algorithm has two steps: (1) calculate the list of iterative maximum cliques and (2) generate the iterative maximal independent set based on the list. As nodes from one maximal clique can not be included into the same independent set, the iterative maximum cliques serve as a heuristic pruning decider to speed up the algorithm.

\begin{figure}[t]
	\centering
	\includegraphics[width=0.8\linewidth]{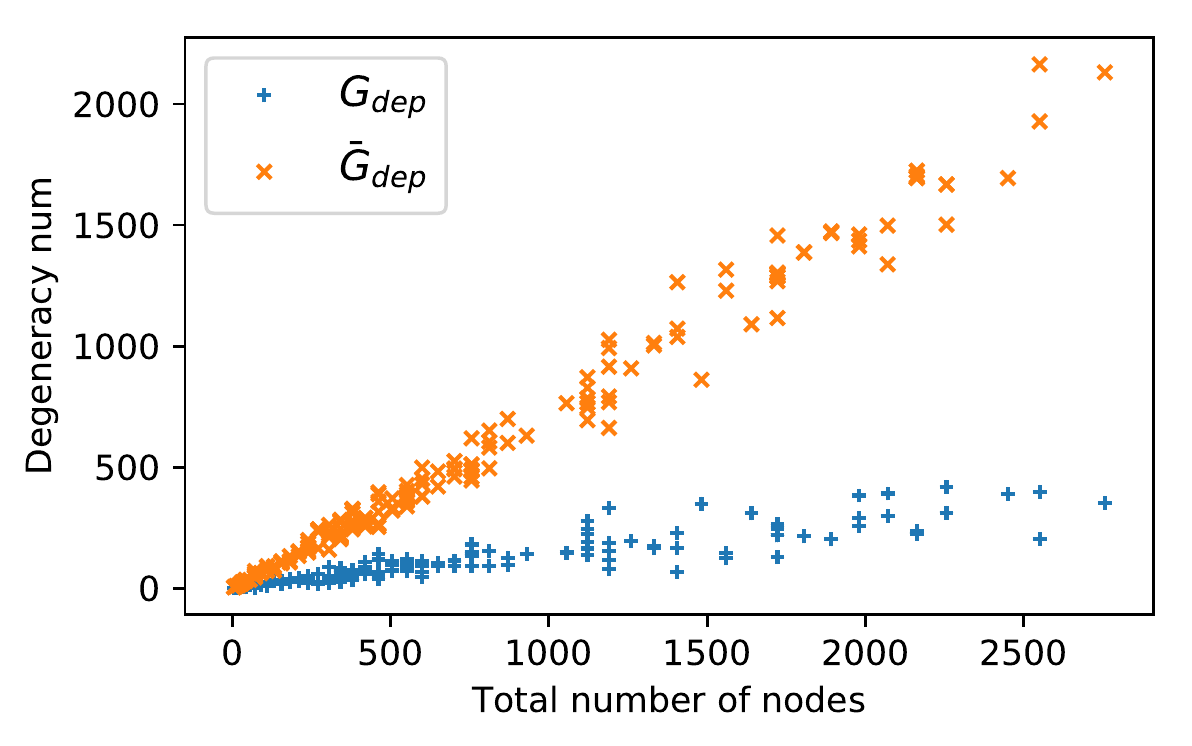}
	\caption{Total number of nodes and degeneracy number of resource dependency graph of WAN topologies \cite{knight2011internet} and its complement graph}
	\label{fig: graph-density}
\end{figure}

\subsubsection{Iterative-rounds MCs algorithm}

\begin{algorithm}[h]
	\caption{Iterative heuristic of migration grouping}\label{alg: heur-iter}
	\KwIn{\{$G_{dep}$\}}
	\KwResult{migGroups $\{I_{iter}\}$}
	\SetKwFunction{maxc}{\textsc{Maximum\_Clique}}
	\SetKwFunction{sort}{\textsc{SortNode}}
	\SetKwFunction{addIndep}{\textsc{addIndep}}
	\SetKwFunction{delnode}{\textsc{DelNode}}
	$\left\{ {{C_{iter}}} \right\} \leftarrow \emptyset$;$\left\{ {{I_{iter}}} \right\} \leftarrow \emptyset$;\\
	\While{$|G_{dep}|$ $!=$ $0$}{	\{Iterative creating Maximum Cliques\}\\
		$\hat {C_{i}}  \gets \maxc(G_{dep})$;\\	
		${G_{dep}} \leftarrow {G_{dep}}\left[ {V\left( {{G_{dep}}} \right) - {{\hat C}_i}} \right]$;\\
		$\{C_{iter}\} \gets \{C_{iter}\} \cup {{\hat C}_i}$;\\
	}
	
	\While{$\{C_{iter}\} \ne \emptyset$ }{
		$I \gets \emptyset$ \\
		\ForEach{ ${{\hat C}_i}$ in $\{C_{iter}\}$}{
			$m \gets \addIndep(I, {{\hat C}_i})$;\\
			\textsc{\delnode(${{\hat C}_i}$, $m$)};\\
		}
		$\{I_{iter}\} \gets \{I_{iter}\} \cup I$;\\
	}
	\KwRet{$\{I_{iter}\}$}
\end{algorithm}

Let ${{\hat C}_i}$ denote the maximum clique and $\left\{ {{{\bar C}_i}} \right\}$ denote the maximal cliques list of round $i$ graph. The iterative-rounds Maximum Cliques (MCs)-based heuristic algorithm (Algorithm \ref{alg: heur-iter}) follows two steps: (1) generating the maximum clique iteratively and (2) obtaining the MIS from the iterative maximum cliques.

As shown in Algorithm \ref{alg: heur-iter}, we first create dependency graph $G_{dep}$ as the input based on the source-destination of the given migrations and the network topology. 
From line 1-6, the algorithm calculates the iterative maximum cliques of the dependency graph until there is no vertices left.
In each iteration, it generates the maximum clique (Bron-Kerbosch Degeneracy algorithm) \cite{eppstein2010listing} of the remaining graph. It is proved that the algorithm is highly efficient in a sparse graph, such as the resource dependency graph \cite{eppstein2010listing}. Then, it updates the remaining graph by deleting the nodes of the maximum clique from $G_{dep}$. 
Let $d_{G[C]}(u) = |N_{G[C]}(u)|$ denote the degree of node $u$ to the remaining graph which excludes all nodes in the clique. 
The node score can be represented as:
\begin{equation}
{{W\left( u \right)} \mathord{\left/
		{\vphantom {{W\left( u \right)} {\left( {{d_{G\left[ C \right]}}\left( u \right) + 1} \right)}}} \right.
		\kern-\nulldelimiterspace} {\left( {{d_{G\left[ C \right]}}\left( u \right) + 1} \right)}}
\end{equation}

In the second step (line 7-12), it generates maximal independent sets based on the iterative maximum cliques. 
In each round (line 9-11), it selects the feasible node with maximum score of each maximum clique ${{\hat C}_i}$ and adds largest-weight migration from its list into the independent set. A node is feasible when it can be included in the current independent set. If there is no migrations left in the migration list of the selected node $M(u)$, the selected node is removed from the clique.
As the largest possible number of maximal cliques in an n-vertex graph with degeneracy d is 
$\left( {n - d} \right){3^{d/3}}$. Therefore, according the iter-MCs algorithm, the upper bound of the size of the iterative maximum independent set of each iteration is also $\left( {n - d} \right){3^{d/3}}$. In the worst case, the time complexity of iter-MCs is $O( {dn^2{3^{d/3}}} )$.

\begin{thm}[Correctness of MIS from Maximal Cliques]
	The Independent Sets generated from maximal cliques are the maximal independent sets of the graph.
\end{thm}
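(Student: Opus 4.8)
The plan is to prove the statement in two stages: first establish a structural property of the iteratively extracted maximum cliques, and then use that property to show that every set $I$ produced in the second loop is both independent and maximal. The crucial observation is that the first loop (lines 1--6) builds a \emph{clique cover} of the graph. Each iteration extracts a nonempty maximum clique $\hat{C}_i$ of the current remaining graph and deletes exactly its vertices, so at least one vertex is removed per round, the loop terminates, and every vertex of $G_{dep}$ is assigned to exactly one $\hat{C}_i$. Because a vertex set that induces a complete subgraph in an induced subgraph of $G$ also induces a complete subgraph in $G$ itself, each $\hat{C}_i$ is a clique of the original $G = G_{dep}$. Hence $\{\hat{C}_i\}$ partitions $V(G)$ into cliques, and the key consequence I will exploit is that \emph{any} independent set of $G$ contains at most one vertex from each $\hat{C}_i$.

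Independence of the set $I$ assembled in the second loop (lines 7--12) is then immediate: the routine \textsc{addIndep} inserts a node only if it is feasible, i.e. it can be added to the current $I$ without being adjacent to any already-chosen node, and since nodes are never removed from $I$, every insertion preserves the independence invariant. So $I$ is an independent set by construction.

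The substance of the theorem, and the step I expect to be the main obstacle, is \emph{maximality}. I would use the standard characterization that an independent set is maximal if and only if it is also dominating, that is, every vertex outside it has a neighbour inside. Take any $w \notin I$ and let $\hat{C}_j$ be the unique clique of the cover containing $w$. Since the loop processes every clique in $\{C_{iter}\}$, exactly one of two cases holds. If some vertex $u \in \hat{C}_j$ was selected into $I$, then $w, u \in \hat{C}_j$ and $\hat{C}_j$ is a clique, so $w$ is adjacent to $u \in I$. If no vertex of $\hat{C}_j$ was selected, this can only be because, at the moment $\hat{C}_j$ was processed, no node in it passed the feasibility test, meaning every node of $\hat{C}_j$ -- in particular $w$ -- was already adjacent to some previously chosen member of $I$; since $I$ only grows during the round, $w$ retains that neighbour in the final $I$. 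In both cases $w$ has a neighbour in $I$, so $I$ dominates $V(G)$ and is therefore a maximal independent set.

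The delicate points to get right are the monotonicity bookkeeping in the second case -- I must invoke that membership in $I$ is non-decreasing over a round, so a vertex dominated when its clique is processed stays dominated afterward -- and the preservation-of-clique claim needed so that the partition of a remaining subgraph is still a valid clique cover of $G$. Finally, if the statement is read per iteration of the outer scheduling loop, I would note that the same argument applies verbatim to the graph induced on the vertices still present at that round, establishing that each $I \in \{I_{iter}\}$ is maximal in its remaining graph, which is precisely what the concurrent-group semantics of Section~\ref{sect: model} requires.
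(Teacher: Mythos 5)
Your proof is correct and takes essentially the same route as the paper's: both arguments rest on the dichotomy that a vertex outside $I$ either shares its cover clique with a selected node (hence is adjacent to it, since clique members are pairwise adjacent) or was excluded by the feasibility test (hence is adjacent to an earlier selection), so no vertex can be added to $I$. The only difference is presentational --- the paper packages this as a proof by contradiction while you argue domination directly, and you make explicit the clique-cover partition and the monotonicity of $I$ within a round, which the paper leaves implicit under ``based on the definition of the heuristic algorithm.''
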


\begin{proof}
	${I_q} = \left\{ {{q_0},{q_1},{q_2},...,{q_d}} \right\}$ is one of the independent sets generated from the maximal cliques of $G\left( {V,E} \right)$, 
	where one vertex comes from only one maximal clique $q \in {C_q}$. 
	Assume, for the sake of contradiction, there is at least one vertex $p$, $p \in {C_p}$ exists, that ${I_q} \cup \left\{ p \right\}$ is also an independent set.
	That is, there is no edge between $p$ and any other vertex $\forall q, q \in {I_q}$, $\neg \exists \left( {p,q} \right) \in E$. 
	Based on the definition of the heuristic algorithm, we can get $\forall r \in {C_p}$, $r \notin {I_q}$, that $\exists q \in {I_q}$, where $\left( {p,r} \right) \in E$.
	Thus, $\exists p,q$, where $p \in {C_p},q \in {I_q}$, that $\neg \exists \left( {p,q} \right) \in E$ and $\exists \left( {p,q} \right) \in E$, which is impossible.
	Since, we have a contradiction, it must be that ${I_q}$ is a maximal independent set.
\end{proof}

\subsubsection{Single-Round MCs Algorithm}
Furthermore, we propose a single-round MCs-based algorithm (single-MCs). It generates the optimal iterative maximum cliques only based on the all maximal cliques of the initial dependency graph $G_{dep}$. The maximum clique size of each iteration is the same as the iter-MCs. We also prove the correctness of the proposed single-round iterative maximum cliques algorithm.

\begin{algorithm}[h]
	\caption{Single-round iterative maximum cliques}\label{alg: heur-single}
	\KwIn{\{$G_{dep}$\}}
	\KwResult{migGroups $\{C_{iter}\}$}
	\SetKwFunction{maxc}{\textsc{Maximum\_Clique}}
	\SetKwFunction{sort}{\textsc{SortDegree}}
	\SetKwFunction{addIndep}{\textsc{addIndep}}
	\SetKwFunction{delnode}{\textsc{DelNode}}
	\SetKwFunction{fcliq}{\textsc{FindCliques}}
	\SetKwFunction{siniter}{\textsc{SINGLE-ITER}}
	\SetKwFunction{delnodes}{\textsc{DelNodes}}
	\nonl\siniter{$\{G_{dep}\}$}:\\
	$\left\{ {{C_{iter}}} \right\} \leftarrow \emptyset$;\\
	$\left\{ {{{\bar C}}} \right\} \gets \fcliq(G_{dep})$;\\
	\While{$\left\{ {{{\bar C}_i}} \right\}$ $\ne \emptyset$}{
		${{\hat C}_i} \gets \textsc{max}(\left\{ {{{\bar C}}} \right\})$;\\
		$\{C_{iter}\} \gets \{C_{iter}\} \cup {{\hat C}_i}$;\\
		$\left\{ {{{\bar C}}} \right\}$ $\gets$ \delnodes(${{\hat C}_i}$, $\left\{ {{{\bar C}}} \right\}$);\\
	}
	\KwRet{$\{C_{iter}\}$}
\end{algorithm}

The first step of the iter-MCs algorithm is replaced by Algorithm \ref{alg: heur-single}.
The algorithm only generates the list of all maximal cliques $\{\bar{C}\}$ once by using the Bron-Kerbosch Degeneracy algorithm. 
Until there is no vertices left in the clique list, it selects the maximum clique (largest maximal clique) $\hat{C}_i$ from the list and deletes the nodes of the selected maximum clique from all maximal cliques. 

\begin{thm}[Correctness of the algorithm single-MCs]
	Given a graph $G=(V,E)$ $V \ne \emptyset$ , the single iteration algorithm SINGLE-MCs generates all and only iteration maximum cliques. 
\end{thm}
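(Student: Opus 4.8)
The plan is to prove the equivalence between single-MCs (Algorithm \ref{alg: heur-single}) and the first phase of iter-MCs (Algorithm \ref{alg: heur-iter}): both produce the same sequence of iterative maximum cliques $\hat{C}_1, \hat{C}_2, \ldots$. The phrase ``all and only iteration maximum cliques'' then means single-MCs outputs exactly this sequence, with no spurious entries and none omitted. Since iter-MCs recomputes the maximum clique of the \emph{remaining} graph at each round, while single-MCs computes the full maximal-clique list once and thereafter only deletes vertices, the heart of the argument is to show that deleting vertices from the precomputed list faithfully tracks the maximal cliques of the shrinking graph.

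The key technical step is a \textbf{restriction lemma}: for any vertex subset $S \subseteq V$, the maximal cliques of the induced subgraph $G[V \setminus S]$ are exactly the inclusion-maximal members of the family $\{C \setminus S : C \in \mathcal{M}(G)\}$, where $\mathcal{M}(G)$ denotes the set of maximal cliques of $G$. First I would prove this. For any maximal clique $K$ of $G[V\setminus S]$, since $K$ is also a clique of $G$, it extends to some maximal clique $C \in \mathcal{M}(G)$; because $K \subseteq V \setminus S$ we get $K \subseteq C \setminus S$, and $C \setminus S$ is itself a clique of $G[V \setminus S]$ containing the maximal $K$, forcing $K = C \setminus S$. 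Conversely every $C \setminus S$ is a clique of $G[V\setminus S]$, so the maximal cliques of $G[V \setminus S]$ are precisely the inclusion-maximal sets among the restrictions $\{C \setminus S\}$. In particular the \emph{maximum} clique of $G[V\setminus S]$ is the largest set in $\{C \setminus S\}$, since any non-maximal restriction is contained in, hence no larger than, a maximal one.

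With the lemma in hand, I would proceed by induction on the round index $i$. Let $S_i = \hat{C}_1 \cup \cdots \cup \hat{C}_i$ be the vertices deleted through round $i$. The inductive claim is that after $i$ rounds the list maintained by single-MCs equals the surviving nonempty entries of $\{C \setminus S_i : C \in \mathcal{M}(G)\}$, which by \textsc{DelNodes} is exactly the restriction of the original maximal-clique family to $V \setminus S_i$. By the restriction lemma, the largest element of this list has the same size as the maximum clique of $G[V \setminus S_i]$, which is precisely the clique iter-MCs computes at round $i+1$ via Bron--Kerbosch on the remaining graph. Hence, under the same largest-first selection and a fixed tie-break, the clique $\hat{C}_{i+1}$ chosen by single-MCs coincides with the one chosen by iter-MCs. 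Both loops terminate when the remaining graph is empty, i.e.\ when every list entry has been emptied by deletions, so the two sequences have the same length. This establishes that single-MCs generates all and only the iteration maximum cliques.

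The main obstacle I anticipate is the restriction lemma, specifically guaranteeing that the true maximum clique of the remaining graph is never ``lost'' by storing only restrictions of the \emph{original} maximal cliques: one must rule out the possibility that a large clique emerges in $G[V\setminus S]$ that is not the restriction of any original maximal clique. The containment argument above closes this gap, but care is needed around the bookkeeping in \textsc{DelNodes}, namely handling emptied entries and duplicate restrictions, so that ``take the maximum of the list'' provably returns the genuine maximum clique at every round rather than a subsumed, smaller set.
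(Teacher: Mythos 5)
Your proposal is correct, and while it follows the same high-level reduction as the paper --- both arguments boil down to showing that vertex deletion on the precomputed maximal-clique list faithfully tracks the maximal cliques of the shrinking graph --- your key lemma is genuinely different from, and in fact a necessary correction of, the one the paper uses. The paper asserts the exact set equality $C(G)\backslash C_0 = C(G\backslash C_0)$, i.e.\ that the vertex-deleted restrictions \emph{are} the maximal cliques of the remaining graph, and proves it by a contradiction whose pivotal step (since $q \notin C_0$, ``$C_f \cup C_0$ is also a maximal clique'') is unjustified: nothing forces the vertices of $C_f$ to be adjacent to those of $C_0$. Indeed the claimed equality is false in general: take a triangle $\{a,b,c\}$ plus edges $ad$ and $de$; the maximal cliques are $\{a,b,c\},\{a,d\},\{d,e\}$, and deleting the maximum clique $C_0=\{a,b,c\}$ leaves list entries $\{d\}$ and $\{d,e\}$, whereas the remaining graph $G[\{d,e\}]$ has the single maximal clique $\{d,e\}$, so the subsumed entry $\{d\}$ is spurious. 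Your restriction lemma states the correct fact --- restrictions coincide with the maximal cliques of $G[V\setminus S]$ only after filtering to inclusion-maximal members --- and your accompanying observation closes exactly the gap this opens: since every maximal clique of the remaining graph is some restriction, and every restriction is a clique of the remaining graph, the \emph{largest} surviving list entry is always a genuine maximum clique, so the algorithm's $\max$ step can never return a subsumed entry while its superset survives, and emptied entries disappear harmlessly. Your explicit induction on rounds with $S_i = \hat C_1 \cup \cdots \cup \hat C_i$ also covers all iterations, whereas the paper argues only the first round; note that the paper's one-round argument could not simply be repeated even if it were sound, because after a deletion the list is no longer the maximal-clique family of any graph --- precisely the situation your invariant (list equals the restriction family of the \emph{original} maximal cliques) is designed to handle. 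In short, the paper's stronger equality would give a one-line identification of the two algorithms if it were true; your weaker, filtered lemma costs a little bookkeeping but is the argument that actually proves the theorem.
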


\begin{proof}
	It is proven that the Bron-Kerbosch Degeneracy algorithm generates all and only maximal cliques without duplications \cite{eppstein2010listing}. Then, we only need to prove the results of iterative maximum cliques are the same in iter-MCs and single-MCs, i.e., one can get all the iterative maximum cliques based on the maximal cliques of the original graph by deleting the vertices from the maximum clique in the last round.
	
	Let $C\left( G \right) = \left\{ {{C_0},{C_1},...,{C_d}} \right\}$ denote all maximal cliques of the original graph $G$, where $\left| {{C_i}} \right| \ge \left| {{C_{i + 1}}} \right|$.  $\forall {C_i},{C_j} \in C\left( G \right)$, that ${C_i} \ne {C_j},{C_i} \not\subset {C_j}$. The next iteration graph is $G\backslash {C_0} = G\left[ {V\left( G \right) - {C_0}} \right]$. Then, $C(G\backslash {C_0}) = \{ C_1^{'},C_2^{'},...,C_e^{'}\}$. The output of first round of single-MCs is $C\left( G \right)\backslash {C_0} = \{ C_1^{''},C_2^{''},...,C_e^{''}\}$. 
	
	Assume, for the sake of contradiction, there is one maximal clique  ${C_f} = C_i^{''} \cup \left\{ q \right\}$, $q \in V - {C_0},q \notin C_i^{''}$, which ${C_f} \in \left\{ {C_j^{'}} \right\}$ and ${C_f} \notin \left\{ {C_i^{''}} \right\}$. Based on the algorithm single-MCs and definition of maximal clique, due to $\left\{ q \right\} \notin {C_0}$,  $C_f \cup {C_0}$ is also a maximal clique that ${C_f} \cup {C_0} \in C\left( G \right)$. However, as $C_0$ is the maximum clique of $G$, it is impossible that ${C_f} \notin \emptyset$. Since, we have a contradiction, the $C_f$ is not exist. Therefore, $C\left( G \right)\backslash {C_0} = C(G\backslash {C_0})$.
	
\end{proof}

\section{Graph Algorithm Performance and Analysis} \label{section: evaluation}

In this section, we evaluate proposed migration planning algorithms for the problem of iterative MIS generation: (1) iter-MCs (2) single-MCs; (3) approximation; and (4) iter-GWIN, in processing time, maximal independent set size, and iteration rounds.
Based on more than two hundred real network WAN topologies \cite{knight2011internet}, we consider a set of live migration requests with each source and destination combination. Each migration request corresponds to one combination with the network routing of the shortest path. We run the computational experiments in Python 3.6.3 and NetworkX package \cite{hagberg2008exploring} version 2.4 as the graph library with source code. 

\begin{table*}[t]
	\caption{Performance comparison with first, second, and third quartile of processing time, total MIS number (iteration), maximum, mean, 95th, and 99th quartile of the independent set size in each result of the total 202 WAN topologies}
	\label{tb: alg-evaluation}
	\centering
	\resizebox{\linewidth}{!}{%
		\begin{tabular}{|l|l|l|llll|}
			\hline
			algorithm		&	proc. time (s)												&	total sets $|\{I\}|$		&	max($\{|I|\}$)  &	mean($\{|I|\}$)			 			&	95\%($\{|I|\}$)		&  99\%($\{|I|\}$)		\\
			\hline
			single-MCs 		&	8.8115				1.0047				0.1554				&	164.5 75.0 30.0				&	88.0 54.0 36.0	&			8.1594 6.1667 5.0			&	23.05 16.8 12.825	&	45.9 25.8 17.7			\\
			iter-MCs 		&	49.1566				4.5807				0.4723				&	165.0 75.0 30.0				&	88.0 54.0 36.0  &			8.1594 6.2124 5.0  		   	&	23.0 16.8 12.65		&	45.6 26.0 17.7			\\
			iter-GWIN 		&	14.2786				1.4916				0.1610				&	159.5 76.0 31.0				&	88.0 54.0 36.0  &			8.2844 6.3448 5.1909  		&	24.8 18.0 13.15		&	48.9 27.0 18.6			\\
			approx			&	1115.2929			57.2547				5.5257				&	171.5 84.0 32.0				&	59.0 36.0 25.0	&			7.7568 5.9492 4.6946		&	21.6 15.85 12.0		&	36.8 23.2 15.8			\\
			\hline
		\end{tabular}
	}
\end{table*}

\begin{figure}[t]
	\centering
	\includegraphics[width=0.9\linewidth]{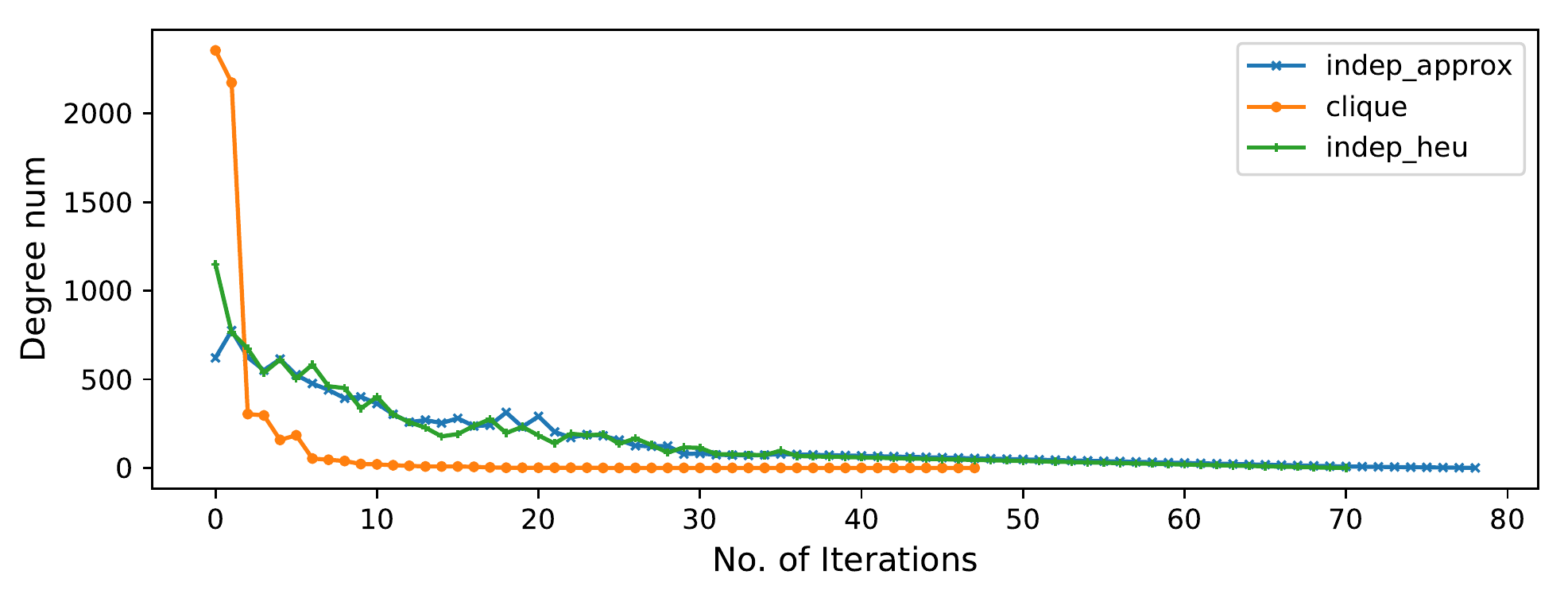}
	\caption{Degree of iteration cliques/independent set to the remaining nodes}
	\label{fig: depGraph-degree}
\end{figure}

The iterative maximum clique generation of migration dependency graph is faster than that of iterative MIS in three aspects: (1) As the analysis of dependency graph in section \ref{section: iter-mcs}, dependency graph $G_{dep}$ is more sparse than its complement $\bar{G}_{dep}$. Therefore, getting the maximum clique of $G_{dep}$ in one iteration is faster than that of $\bar{G}_{dep}$; 
(2) The maximum clique can reduce the complexity of the graph much more efficiently in each iteration; and (3) There are fewer iterative maximum cliques of $G_{dep}$ than the iterative independent sets. In other words, the number of iteration rounds of the iterative maximum clique is smaller.

Figure \ref{fig: depGraph-degree} demonstrates an illustrative result of one of the network topology (Australia's Academic and Research Network, AARNet). The dependency graph consists of a total of 342 nodes and 11754 edges. It shows the degree of the maximum clique and independent set in each iteration to the remaining nodes. In other words, it is the edges of the removed nodes in each iteration excluding the edges between nodes from the maximum clique. Note that there is no edges (degree is zero) between nodes in one independent set.
With the degree in the maximum clique and the degree to the remaining graph, the complexity of $G_{dep}$ is dropped dramatically in the first three iterations. On contrary, by removing the maximum independent set, the complexity of the graph remains at a high level and declines steadily. Furthermore, the number of total iterative maximum cliques and iterative MISs is 47 and 70, respectively.
Comparing the result of the approximation (indep\_approx) with iter-MCs (indep\_heu), the heuristic iterative MCs-based algorithm achieves better performance in the size of the maximum clique in each iteration and the total iteration rounds.

\begin{figure}[t]
	\centering
	\includegraphics[width=0.9\linewidth]{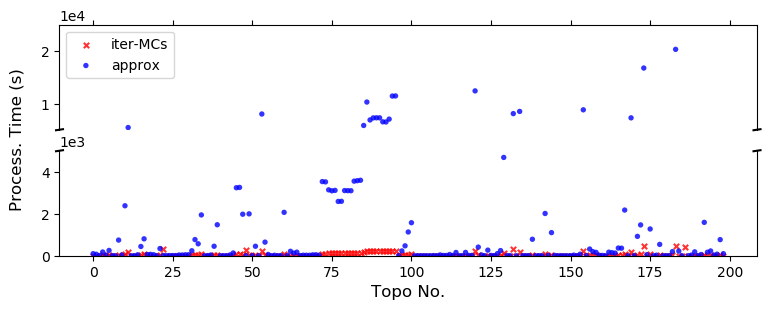}
	\caption{Processing time comparison between iter-MCs and approximation}
	\label{fig: proc-time1}
\end{figure}

\begin{figure}[t]
	\centering
	\includegraphics[width=0.9\linewidth]{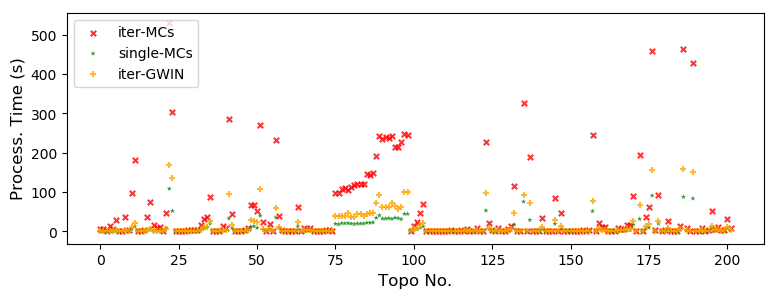}
	\caption{Processing time comparison between iter-GWIN, iter-MCs, and single-MCs}
	\label{fig: proc-time2}
\end{figure}

Since the processing time varies greatly, we use two separated figures to represent the results of processing time.
Figure \ref{fig: proc-time1} shows the performance comparison between the approximation (approx) and iter-MCs in processing time. Figure \ref{fig: proc-time2} shows the comparison between iter-MCs, single-MCs and iter-GWIN.
The results of computational experiments indicate that the approximation algorithm has the worst performance in processing time. From approx to iter-MCs (Fig. \ref{fig: proc-time1}), the average processing time of all topologies decreases by $91.32\%$. From iter-MCs to iter-GWIN and iter-GWIN to single-MCs, the average processing time decreases by $57.40\%$ and $20.84\%$.
Table \ref{tb: alg-evaluation} also illustrates the third (Q3), second (mean), and first quartile (Q1) of the average processing time. For the dependency graph with every source and destination combinations of a relative small size network, the single-MCs and iter-GWIN can both generate the scheduling plan in around 0.15 seconds. However, the performance difference in processing time increases with the size of the network topology. For mean and the Q3 of all processing time results, the average processing time of single-MCs decreased by $32.64\%$ and $38.29\%$ from iter-GWIN, respectively. 
In summary, the single-MCs algorithm has the best performance in processing time. 

We also evaluate the size of the result list or iteration rounds $|\{I\}|$. It is the number of sets the algorithm divides into different concurrent groups for the given migrations. For the approx algorithm, from Q3 to Q1, it generates 171.5, 84.0, and 32.0 many of iterative MIS in one planning result.
From approx to iter-MCs, the iteration number decreases by $3.79\%$, $10.71\%$, and $6.25\%$, respectively.

For the performance in iterative MISs of each graph, we examine the size of the largest iterative MIS (max($\{|I|\}$)) and the mean size (mean($\{|I|\}$)). As the first several rounds of the result are the most essential factors on scheduling performance, we also evaluate the algorithm in the 95-quartile and 99-quartile of the iterative MISs size.
The algorithm iter-GWIN has the best performance in the large network topology. The total number of iterative MIS is reduced by $3.04\%$ compared to the results of single-MCs.
Although the mean results of the set size mean($\{|I|\}$) of approx algorithm is close to other three algorithms, its performance in the first several iterations is the worst. As a result, the total set of approx algorithm is significantly larger than other algorithms.
For the maximum set size, single-MCs, iter-MCs and iter-GWIN has the identical performance in Q1, mean, and Q3 from all results of network topologies. For the 95th and 99th quartile iter-GWIN for directly calculate the maximum clique has a slightly better performance over the iterative MCs-based heuristic algorithms even though the processing time is higher.

\section{Simulation and Performance Evaluation} \label{section: experiment}
In this section, we evaluate proposed solutions using real-world traces on an event driven simulator.
We first describe the real-world telecom base station dataset and taxi GPS traces used in the experiments. 
We explain the placement of edge data centers and the network topology and region coverage of each EDC. 
The event-driven simulator for software-defined network-enabled edge-cloud computing CloudSimSDN \cite{son2019cloudsimsdn} is extended to emulate the the user movement and the live container migration in edge computing. It provides a network operating system based on the software-defined networking for dynamic service and network resource monitoring and allocation. Compared to the simulation results driven by mathematical models, this can generate more realistic results without following the strong assumption encoded in the proposed mathematical modeling.

We compare and evaluate the performance of live container migration planning and scheduling algorithm (iter-GWIN and single-MCs)
against a policy with no planning scheduling and the state-of-art live VM migration cloud algorithm FPTAS~\cite{wang2017virtualold} in processing time, migration time, downtime, transferred data, deadline violations, and network transmission time.

\begin{figure*}[th]
	\centering
	\begin{subfigure}{.20\linewidth}
		\centering
		\includegraphics[width=\linewidth]{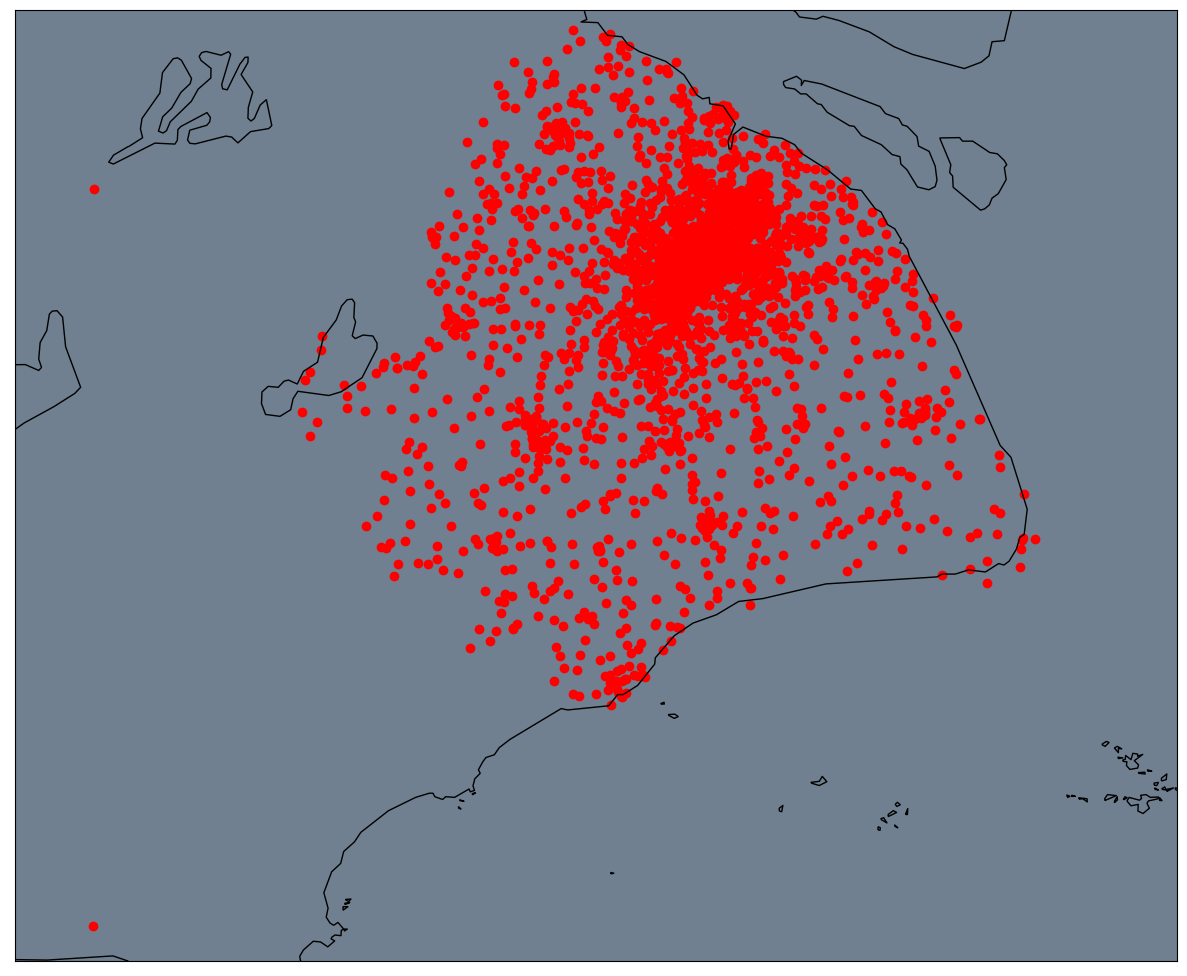}
		\caption{Shanghai Telecom base station locations}
		\label{fig: base-station}
	\end{subfigure}%
	\hspace{0.1em}
	\begin{subfigure}{.24\linewidth}
		\centering
		\includegraphics[width=\linewidth]{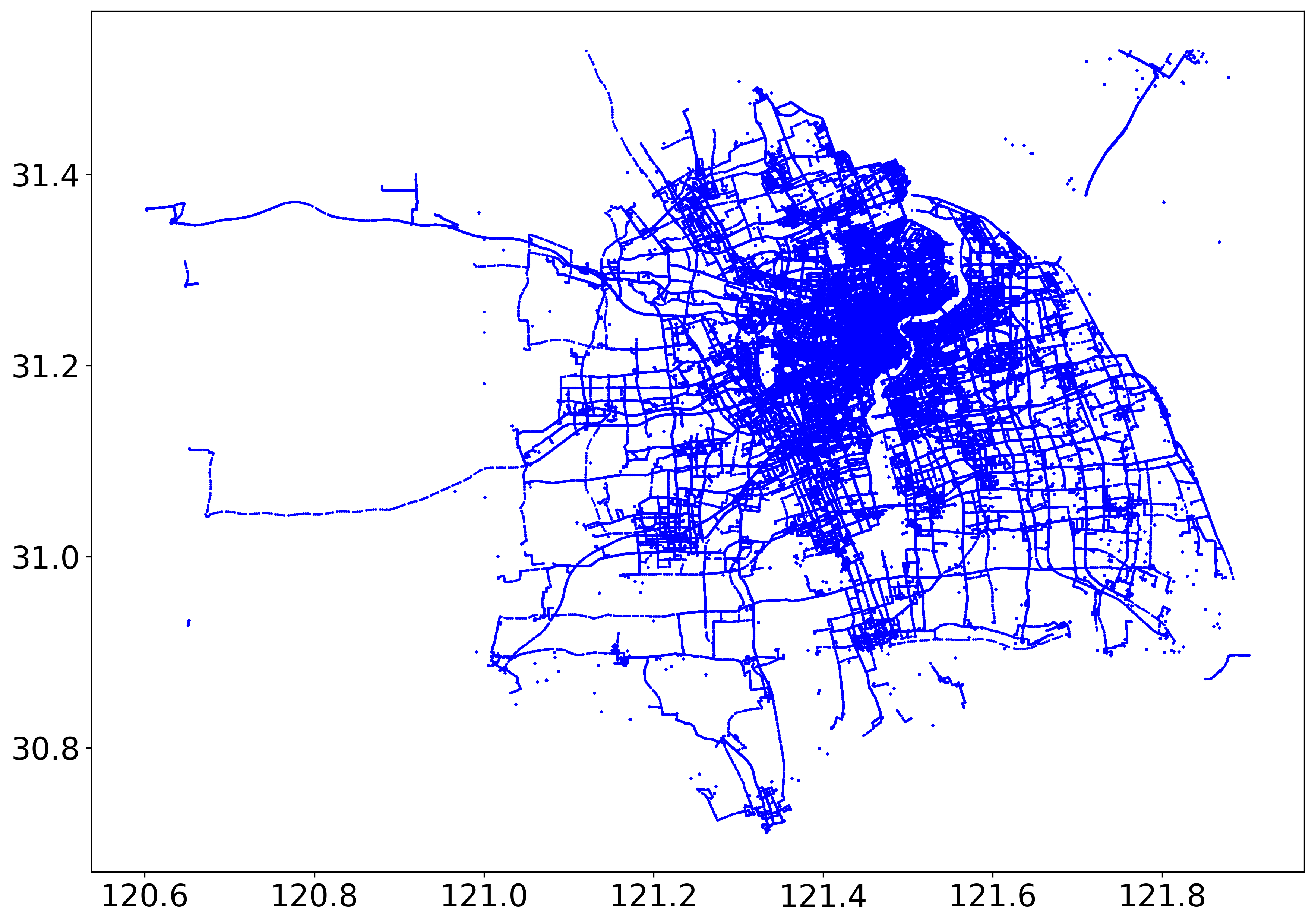}
		\caption{The taxi GPS trace in the first hour}
		\label{fig: taxi-dataset}
	\end{subfigure}
	\hspace{0.1em}
	\begin{subfigure}{.24\linewidth}
		\centering
		\includegraphics[width=\linewidth]{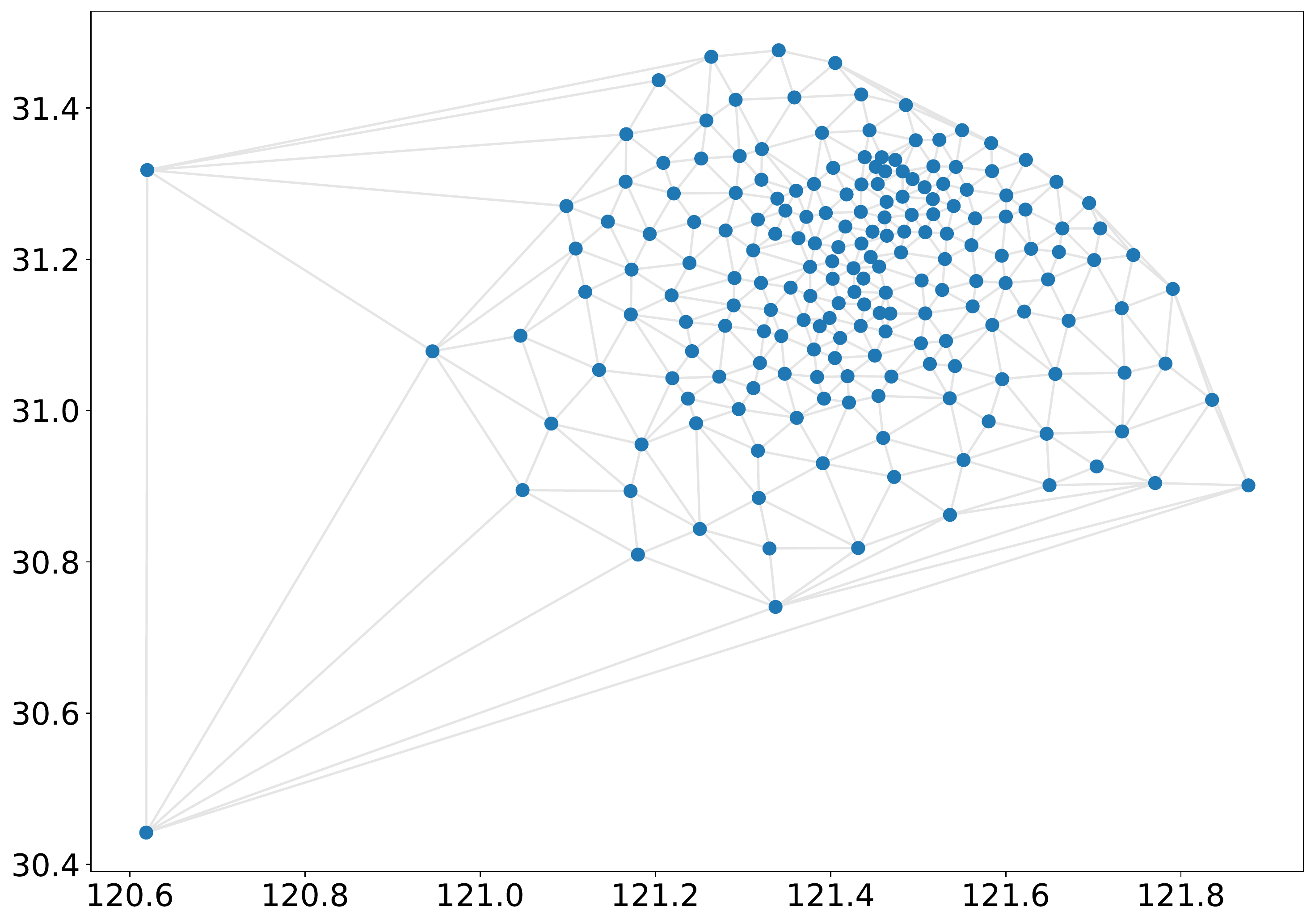}
		\caption{Delaunay triangulation based Edge computing topology}
		\label{fig: edge-topo}
	\end{subfigure}
	\hspace{0.1em}
	\begin{subfigure}{.24\linewidth}
		\centering
		\includegraphics[width=\linewidth]{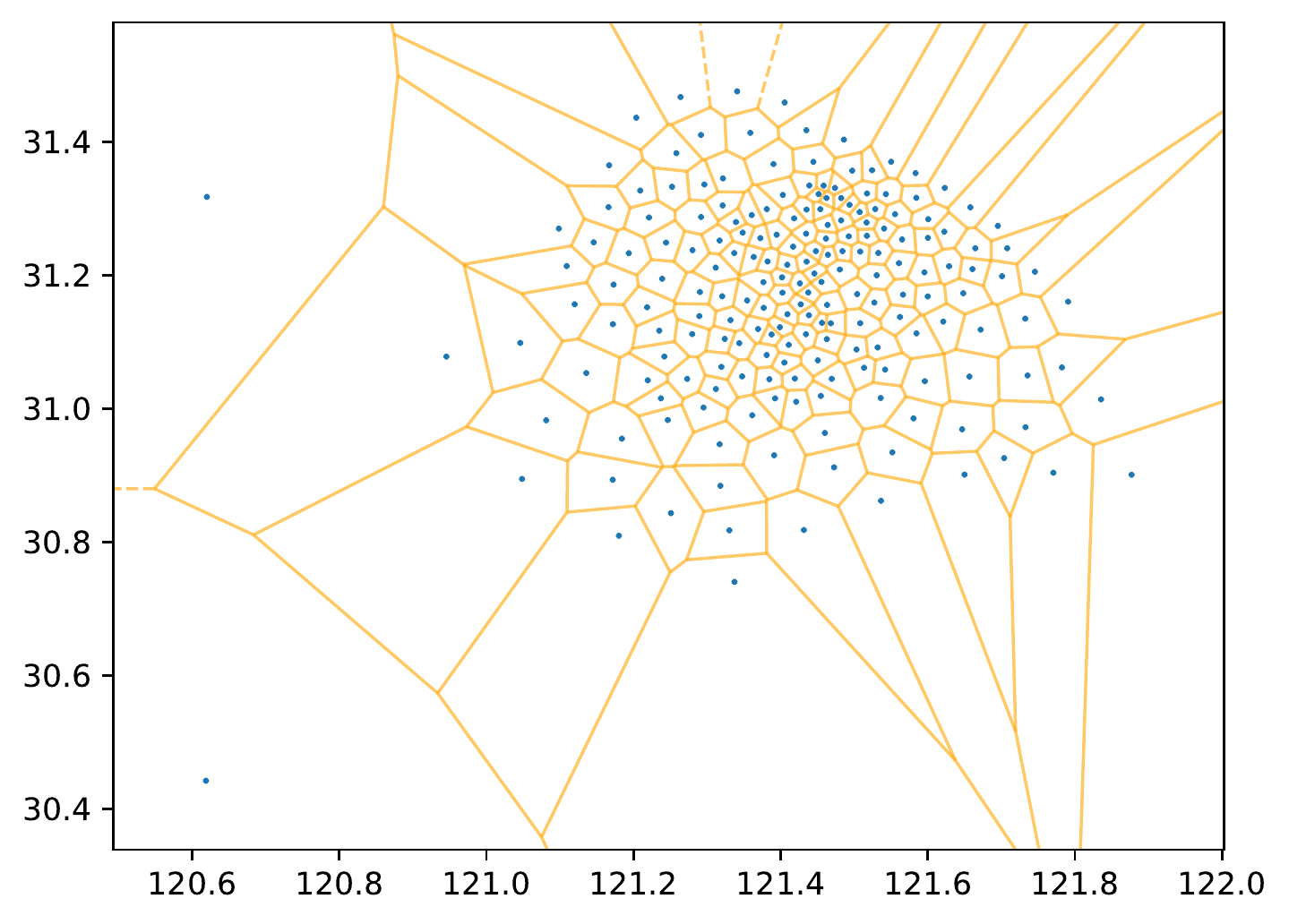}
		\caption{Edge data center coverage with Voronoi cells}
		\label{fig: edge-coverage}
	\end{subfigure}
	\caption{Experimental dataset and configurations of longitude and latitude}
\end{figure*}

\subsection{Experimental Data}
In this section, we describe the base stations coordinates provided by Shanghai Telecom dataset\footnote{http://sguangwang.com/TelecomDataset.html} and Shanghai Qiangsheng taxi GPS trace dataset (April 1, 2018)\footnote{http://soda.shdataic.org.cn/download/31} used in our experiments. 
The given data is preprocessed by limiting the range of the longitude and latitude from ${30.40^ \circ }$ N to ${31.35^ \circ }$ N and ${120.51^ \circ }$ E to ${122.12^ \circ }$ E as there are some taxis travel to nearby cities.
The Shanghai Telecom dataset contains the longitude and latitude coordinates of a total of 3233 base stations as shown in Fig. \ref{fig: base-station}. 
We use K-means algorithms \cite{xu2015efficient} to generate the location of a total of 200 Edge Data Centers (EDCs) based on the longitude and latitude of the given base stations. Figure \ref{fig: taxi-dataset} illustrate the taxi GPS trace in the first hour. Besides the GPS coordinates and timestamp, each data record of the taxi dataset also includes the taxi id, service status, such as alarm, occupation, taxi light, road type, and breaking, as well as vehicle speed, direction, and the number of connected satellites.

Figure \ref{fig: edge-topo} illustrates base stations are clustered and connected to one of the regional Edge Data Centers. There is no information on the physical network topology and connectivity between EDCs. As shown in Fig. \ref{fig: edge-topo}, for the geometric spanner, we choose Delaunay Triangulation \cite{virtanen2020scipy} to generate links between the gateway of each EDC. For the network routing within the generated network topology, we consider the shortest path, which is no longer than 
$4\pi/3\sqrt{3}$
times the Euclidean distance between source and destination. As a result, the boundary of EDC regions (Fig. \ref{fig: edge-coverage}) is a Voronoi diagram \cite{virtanen2020scipy} where the Euclidean distance of any point to its corresponding EDC region is less than or equal to its distance to any other EDC.

\begin{figure}[ht]
	\centering
	\includegraphics[width=0.8\linewidth]{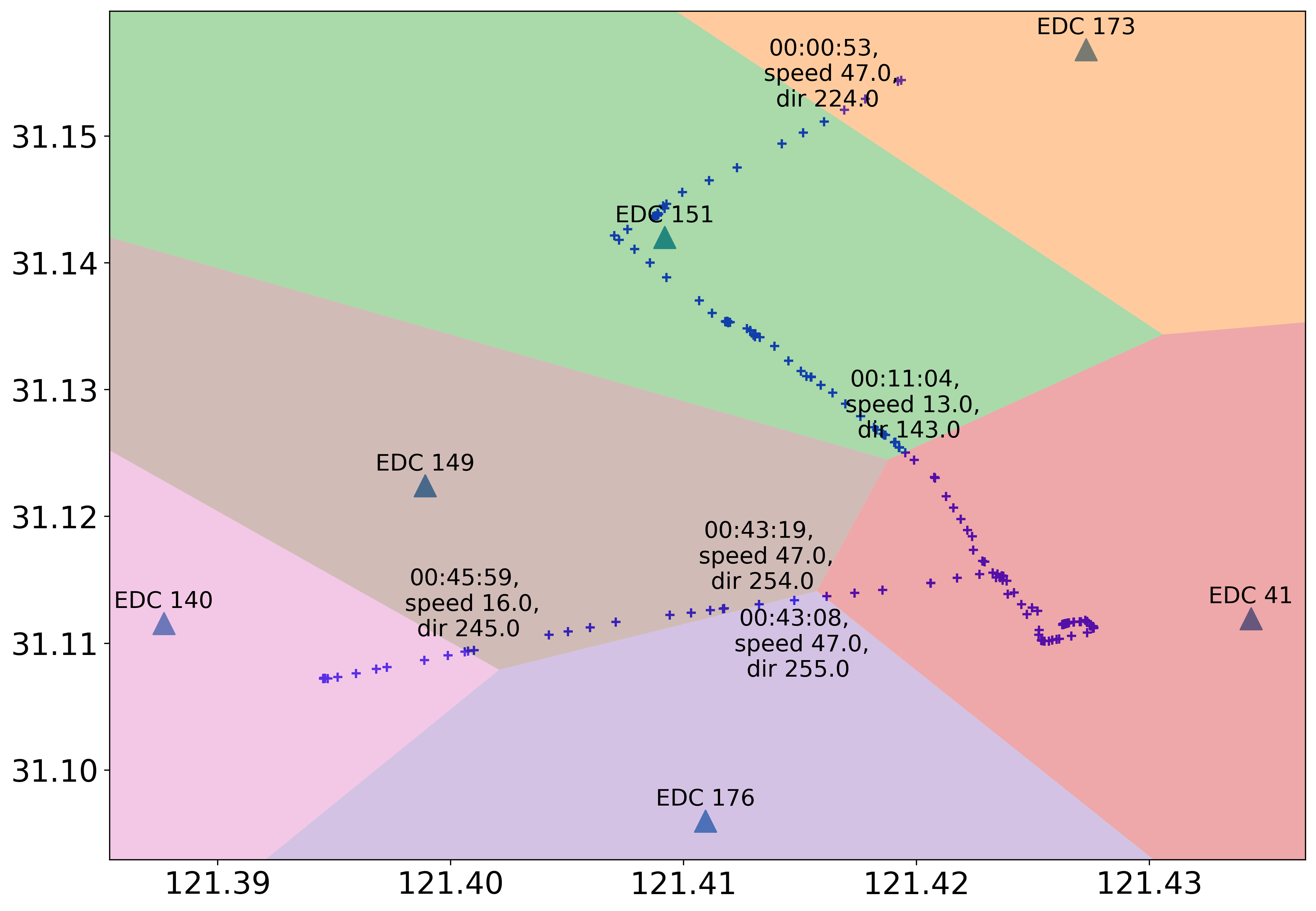}
	\caption{Example of live migration request triggered by user movement of longitude and latitude}
	\label{fig: user-move-example}
\end{figure}

Similar to other research regarding the generation of mobility-induced live migrations in edge computing \cite{wang2019dynamic,wang2019delay}, we combine these two datasets to simulate the scenarios where the user needs to connect to the services and maintains the low end-to-end latency through live container migration in edge computing environments.
Figure \ref{fig: user-move-example} demonstrates an example that the request of live container migration is induced when a taxi moves across the boundary between two clusters of EDCs. The deadline of each live container migration is generated based on the average mobility speed of users in the last 3 GPS records, travel direction, and the signal strength of base stations.

\subsection{Experimental Setup}

\begin{table}[ht]
	\caption{Multiple container migration scenarios}
	\label{tb: scenarios}
	\centering
	\begin{tabular}{|l|lll|}
		\hline
		Scenario  & S1 		 & S2 		& S3 \\
		\hline
		vehicles  & 1000      &   2000   & 4000	 \\
		migrations &  9933    &   19522    &37822 		\\
		\hline
	\end{tabular}
	
\end{table}

In this section, we describe details of the experiment setup.
The end-to-end delay between the user and the service is the time interval from the user (taxi) sent workload to the container assigned in the EDC to the result is received by the user.
To generalize computer vision use case workloads, the service task generated during the experiments follows the Poisson distribution with a mean of 24 per second (24 FPS).
In each task, the network packet size sent from a user is 16384 bytes ($128*128$ bytes). The processing workloads in the container are randomly generated from 500 to 1000 cycles per bit \cite{wang2019delay}. The result packet sent back from the container to the user is 128 bytes. The sum CPU power frequency for each EDC with multiple CPUs is 25 GHz \cite{sun2017emm,xiao2017qoe}. 
To simulate the limited network resources for migrations in the edge, we consider that the reserved network bandwidth between a container and its user is 3 Mbps. The physical network bandwidth is 1 Gbps. The network delay between any based station to its regional EDC is 5 ms and the delay between two EDCs is randomly generated in the range 5 to 50 ms~\cite{wang2019delay}. 

According to the evaluation results of container memory and dirty memory size during live container migrations \cite{ma2018efficient}, we generate the container memory from 100 MB to 400 MB.
The dirty page rate for each dirty memory transmission is from 2MB/s to 8MB/s and the data compression rate is 0.8 \cite{svard2011evaluation}. We configure the downtime threshold and the maximum iterations for live container migration at 0.5 seconds and 30 times~\cite{he2019performance}, respectively. Based on the SDN controller, the remaining network bandwidth between the source and destination EDC which is not utilized by services is allocated to the live container migration traffic. If several live migrations are sharing part of their routings, the bandwidth will be allocated evenly to each of the network flows.

From the experimental scenario S1 to S3 (Table \ref{tb: scenarios}), 1000, 2000 and 4000 vehicles are selected randomly. We consider the GPS trace of selected vehicles within 1 hour.
For the initial placement at the start of the experiment, we allocate corresponding containers for each vehicle at the same edge data center according to its GPS coordinates. The nearest edge data center first policy is considered for our experiment to generate the live container migration requests. According to the user mobility, one live container migration will be triggered when one vehicle exits the coverage area of its current edge data center. 
There are 9933, 19522, and 37822 migration requests induced by these vehicles' movement, respectively.
During the live container migration, the dirty memory of migrating container will be copied iteratively from the source edge data center to the nearest edge data center through the shortest network path.  For the evaluation sensitivity, the results of each scenario are an average of 10 individual experiments. In this experiment, we consider that the container image as a universal service is already available in all edge data centers or shared by the network storage between EDCs.
CloudSimSDN-NFV  \cite{son2019cloudsimsdn}, an event-driven simulator, is extended with corresponding components to support the live container migration and user mobility in edge computing environments.

\begin{figure*}[th]
	\centering
	\begin{subfigure}{.33\linewidth}
		\centering
		\includegraphics[width=\linewidth]{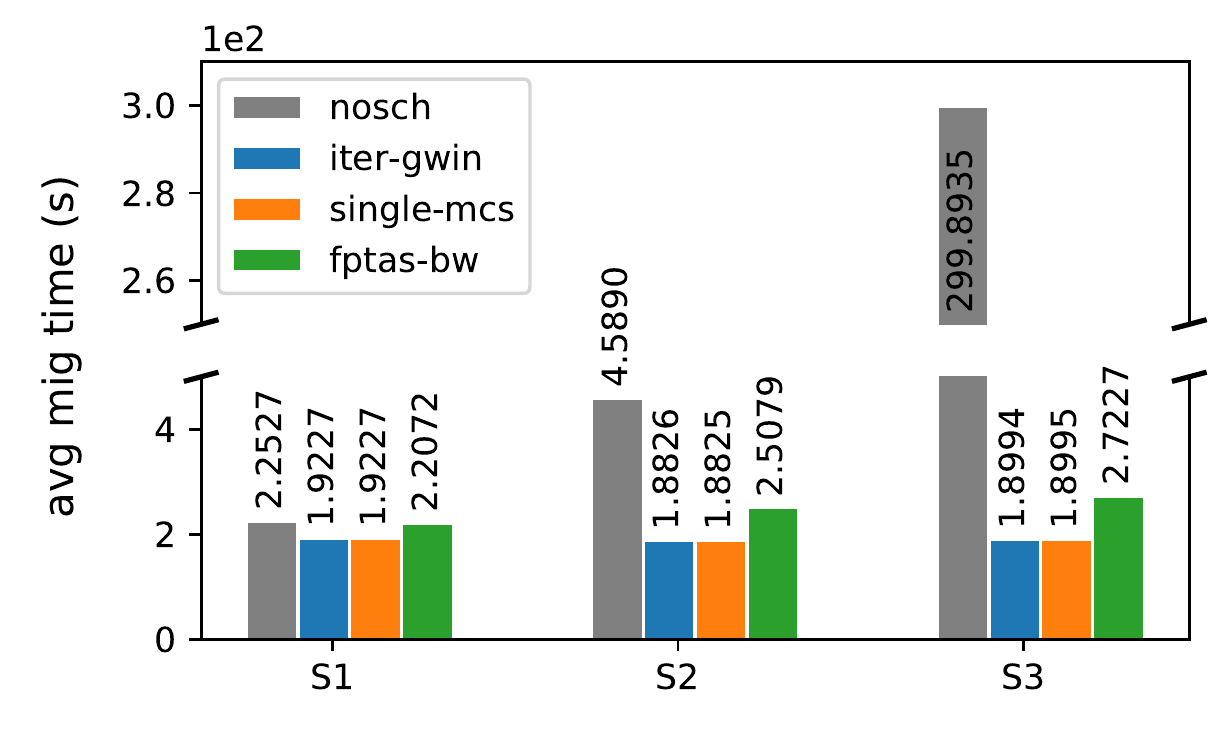}
		\caption{average migration time}
		\label{fig: avg-mig}
	\end{subfigure}%
	\begin{subfigure}{.33\linewidth}
		\centering
		\includegraphics[width=\linewidth]{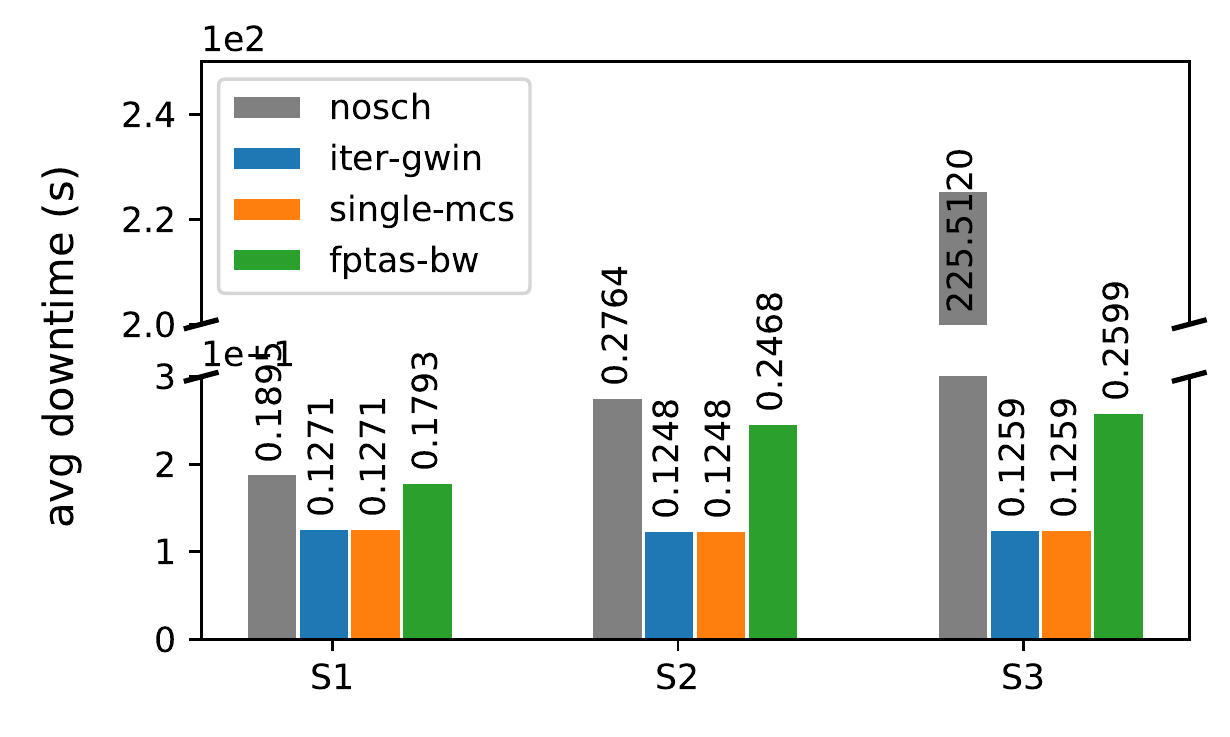}
		\caption{average downtime}
		\label{fig: avg-dt}
	\end{subfigure}
	\begin{subfigure}{.33\linewidth}
		\centering
		\includegraphics[width=\linewidth]{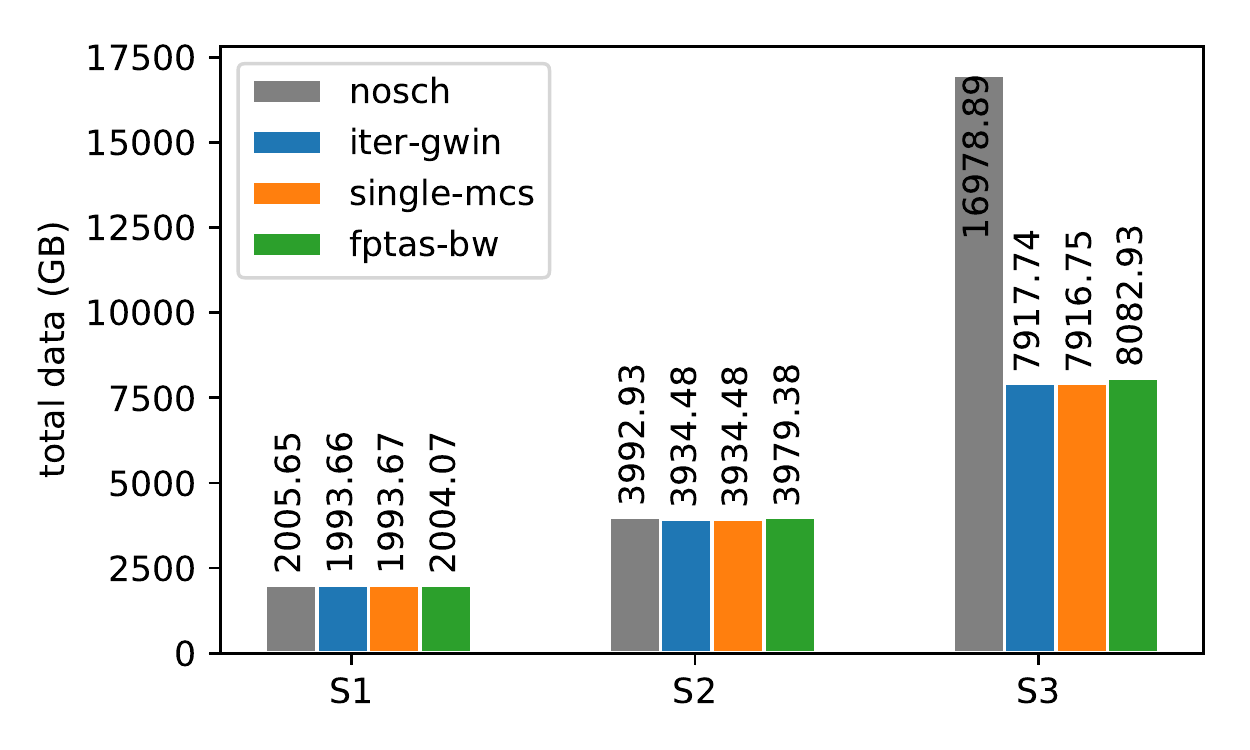}
		\caption{total transferred data}
		\label{fig: total-data}
	\end{subfigure}
	
	\begin{subfigure}{.33\linewidth}
		\centering
		\includegraphics[width=\linewidth]{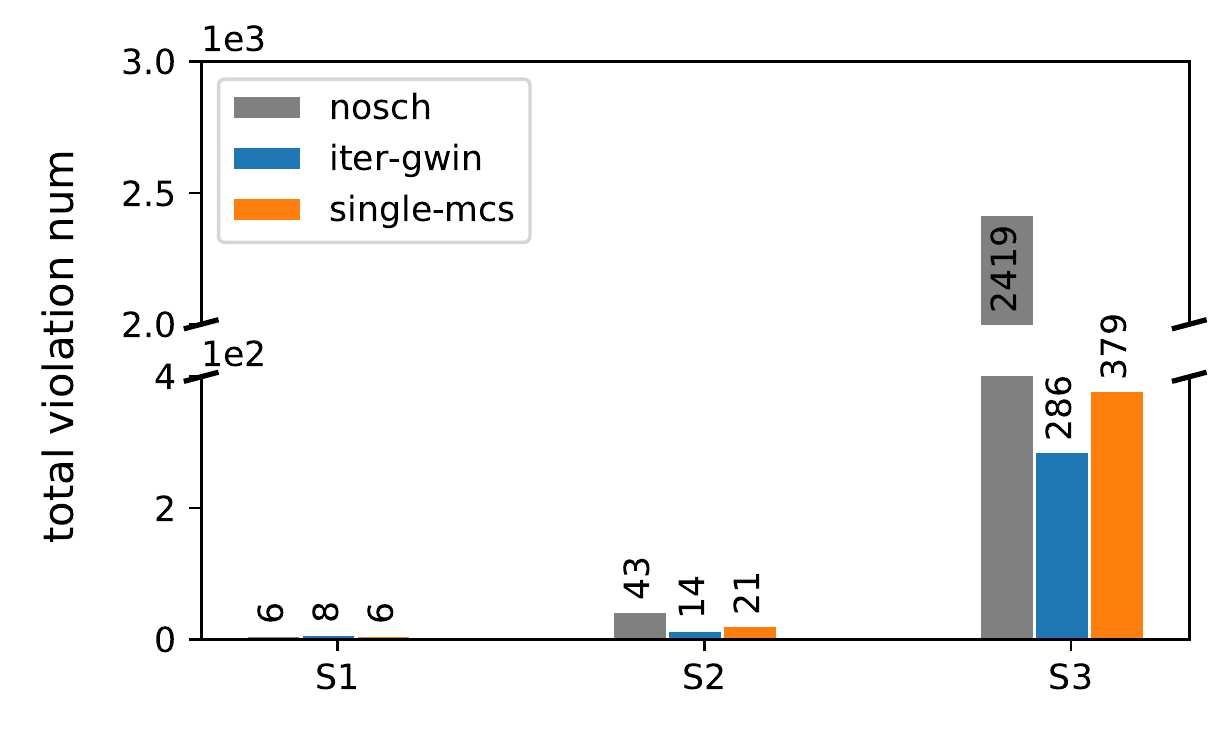}
		\caption{total deadline violations}
		\label{fig: vio-num}
	\end{subfigure}%
	\begin{subfigure}{.33\linewidth}
		\centering
		\includegraphics[width=\linewidth]{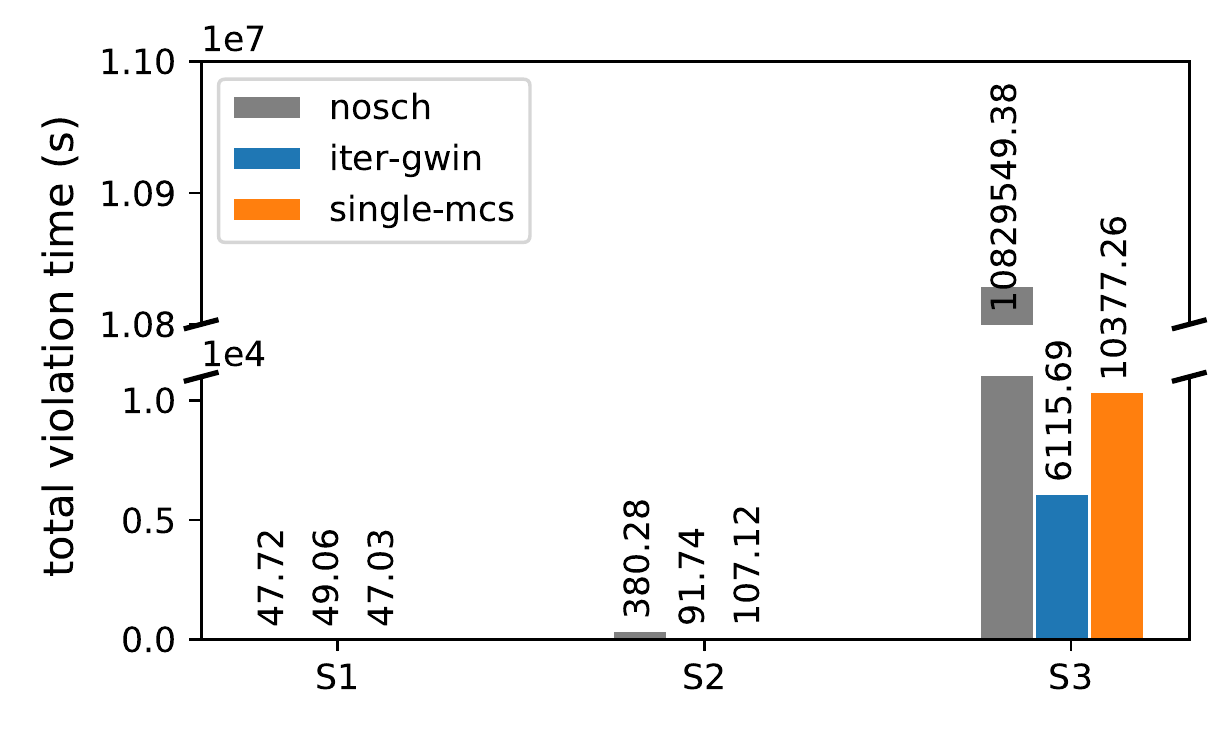}
		\caption{total violation time}
		\label{fig: vio-time}
	\end{subfigure}
	\begin{subfigure}{.33\linewidth}
		\centering
		\includegraphics[width=\linewidth]{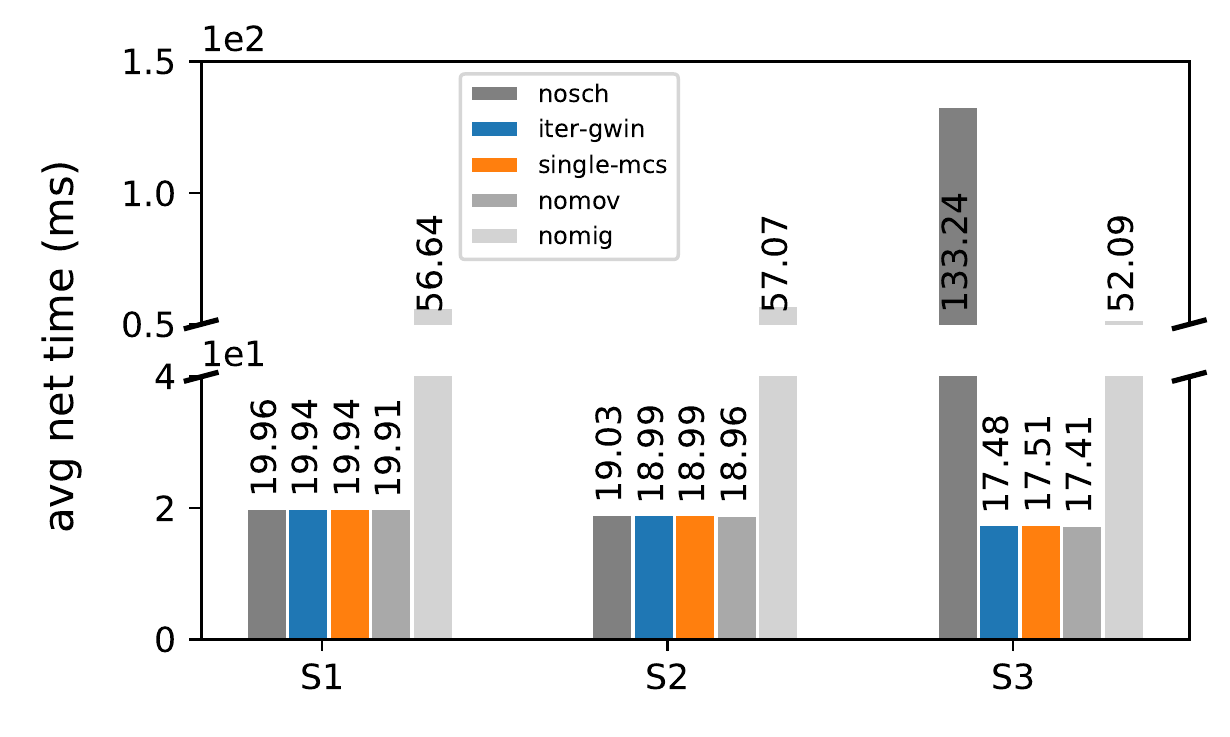}
		\caption{network transmission time}
		\label{fig: avg-net-time}
	\end{subfigure}
	\caption{Migration performance comparison with no scheduler, iter-GWIN, and single-MCs under different scenarios.}
	\label{fig: mig-performance}
\end{figure*}

\subsection{Experimental Results and Analysis}
\begin{table}[t]
	\caption{Total processing time comparison in milliseconds}
	\label{tb: results-proc}
	\centering
	\begin{tabular}{|l|lll|}
		\hline
		algorithm 		&  S1 		  &  	S2			   	   & 		S3 \\
		\hline
		iter-GWIN       & 306.9607  &      762.3356       &	    3997.8309		 \\
		single-MCs      & 332.5682  &      583.3951        &	1544.6291     \\
		FPTAS\cite{wang2017virtualold}&903597.39	  &     1923036.81        &     4677990.57             \\
		\hline
	\end{tabular}
	
\end{table}

In Section \ref{section: evaluation}, we compare the algorithm performance in terms of the size of iterative MISs and processing time. Thus, we only evaluate the two best algorithms in this section. We compare the experimental results between no migration scheduler, iter-GWIN, single-MCs, and the current state-of-the-art migration planning and scheduling in clouds FPTAS \cite{wang2017virtualold}. In FPTAS, to maximize the total bandwidth utilized by migrations, one migration can be started even there is considerably limited bandwidth which is much lower than the dirty page rate per second. The solution can cause devastating migration performance. Thus, we improve FPTAS by adding a bandwidth threshold (FPTAS-BW) that the available bandwidth must larger than the dirty page rate as the migration start condition.
As the vehicle number increases from scenario 1 (S1) to scenario 3 (S3), the density of live migration requests in certain areas increases dramatically. The resource competition or resource dependency among live container migration requests will also increase. As a result, the complexity of the dependency graph may also increase. When the requirements of live container migration requests exceed the resource capacity provided by the edge computing, it is inevitable that some of the deadlines of some migration requests can not be satisfied.

Table \ref{tb: results-proc} shows the total processing time of migration planning and scheduling algorithms within 1 hour in milliseconds. From S1 to S3, the average processing time of single-MCs for each migration planning is 0.1175, 0.1936, and 0.4904 milliseconds. Compared to iter-GWIN, the processing time of single-MCs decreased by $61.36\%$ in scenario S3.
The results are consistent with the algorithm evaluation in Section \ref{section: evaluation}.
Furthermore, compared to FPTAS \cite{wang2017virtualold}, the performance of our solution in terms of processing time has been improved by more than 3000 times. In S3, the processing time of FPTAS is about 78 minutes. As a result, even with any weight modification in the algorithm, the migration deadline in seconds will be missed. Therefore, as the results of FPTAS-BW in deadline violation are off the limit of chart comparison, we only compare it in the migration performance.

From S1 to S3, without migration planning and scheduling, more live migrations compete with each other on the network routing and the available bandwidth. As a result, the average migration time increases dramatically from 2.25 and 4.59 seconds to 299.89 seconds (Fig. \ref{fig: avg-mig}). Particularly, in S3, the allocated bandwidth may either be smaller than the dirty page rate and cause a large downtime for some migrations. Or, it causes a much longer migration time due to a large number of memory-copying iterations. 
As a result, the migrating service suffers a devastating consequence.
Furthermore, for FPTAS-BW, by maximizing the total migration bandwidth rather than the resource competitions, it suffers smaller average bandwidth per migration. Thus, as shown in Fig. \ref{fig: mig-performance} the performance of our purposed solution in terms of average migration time, average downtime, and total transferred data are increased by up to $30.24\%$, $51.56\%$, and $2.06\%$, respectively. 
Meanwhile, for the proposed planning and scheduling algorithms iter-GWIN and single-MCs, the performance of live migration can be guaranteed even with severe resource competitions. Results (Fig. \ref{fig: avg-mig}, \ref{fig: avg-dt}) show that the average migration time and downtime are optimal at 1.9 sand 0.13 seconds as there is no bandwidth sharing between resource-dependent migrations. 
Furthermore, for all the migrations that arrive within the 3600 seconds time interval in S3, iter-GWIN and single-MCs can finish the scheduling of all migrations in 3603.91 and 3601.43 seconds. However, the total migration time of no scheduler is 3603.43 seconds in S2 and 48878.65 seconds in S3.
A shorter average migration means less possibility of QoS degradation and less occupation time on the network resource. A smaller downtime equals fewer disruptions on the migrating services.

Another critical migration performance is the transferred data of the live migration. It is also highly related to network energy efficiency.
In S1 and S2, although average migration time and downtime increase due to less allocated bandwidth, there is no surge in the transferred data for the no migration scheduling situation (Fig. \ref{fig: total-data}). Because of the container's small memory footprint, the shared bandwidth can still satisfy the downtime threshold with relatively small memory-copying iterations.
However, when the bandwidth becomes the bottleneck, a large number of memory-copying iteration needed to meet the downtime threshold. Therefore, the total transferred data in S3 increase by $114.47\%$ compared with the optimal result from single-MCs.

The deadline of a live migration request is highly related to the QoS and SLA requirement of the real-time migrating service.
For iter-GWIN and single-MCs, the ratio of migration violation numbers to the total migration number is $0.071\%$ and $0.107\%$ in S2 and $0.756\%$ and $1.002\%$ in S3 (Fig. \ref{fig: vio-num}). However, the ratio for no migration scheduler is 3.07 times in S2 and 8.46 times compared to the best result from iter-GWIN. The ratio of total violation time to the service time of all containers in one hour is $0.00127\%$ and $0.00148\%$ in S2 and $0.0425\%$ and $0.0720\%$ in S3, respectively (Fig. \ref{fig: vio-time}). In S3, although migration performance in terms of migration time and downtime is optimized by the migration scheduler, the network resource is insufficient to schedule all 37822 migration requests on time with the live migration competitions. It is inevitable to violate the deadline of certain migrations with lower priority to satisfy the deadline for others. As a solution, one needs to increase the network resource by providing duplicate EDCs and additional network routing or available bandwidth in the hot spot to alleviate the deadline violation of real-time migrations.

The end-to-end delay for the migrating edge service is affected by the migration downtime and the duration of deadline violation.
For the network transmission time, we compare the results of no user movement and no migration, no migration requests with user movement, no scheduler, iter-GWIN and single-MCs (Fig. \ref{fig: avg-net-time}).
In the scenario that all vehicles stay at the s and do not move during the experiment time (nomov), the average network transmission time to the service or the end-user is from 17.4 to 19.9 milliseconds from S3 to S1.
Without the live migration requests (nomig), the end-to-end delay can be not guaranteed due to the network delay between the EDC and the end-user. Specifically, the average network transmission time is around 56 milliseconds. The live migration planning and scheduling algorithm (iter-GWIN and single-MCs) can guarantee the average service network transmission time. In S3, without a migration scheduler, the downtime and deadline violation have a considerable impact on the service network delay. The network delay increases by 6.62 times compared to the result of iter-GWIN.

In summary, our proposed algorithms can efficiently plan and schedule large-scale mobility-induced live container migrations in edge computing. Even in the case of a migration request surge, it guarantees the performance of live container migrations and maintains the QoS of migrating services.
It significantly reduces average migration time (up to $99.36\%$), average down time (up to $99.94\%$), total deadline violations (up to $88.18\%$) and violation time (up to $99.94\%$).

\section{Conclusions and Future Work} \label{section: conclusion}
In this paper, we investigated the challenges of live container migration scheduling in edge computing environments including (1) resource competition or dependency among live migrations and (2) real-time migration planning and scheduling.
We modeled the relationship of resource dependency among migrations as an undirected graph.
and the scheduling problem as generating the maximum independent set of the dependency graph iteratively.
We proposed a framework for user-triggered or mobility-induced migration scheduling which is different from the traditional scheduling for live VM migrations in cloud data centers. 
The SDN is introduced to separate the computer network to minimize the impact of migration flows on other edge services.
Based on the dynamic computing resources, network resources and topology provided by the container/VM orchestration engine and SDN controllers, the migration management service can plan and schedule multiple migration requests in a fine-grained manner. 
We proposed two methods for large-scale migration planning and scheduling algorithms based on iterative Maximal Independent Sets. Computational experiments were conducted to evaluate the algorithms' performance. 
Furthermore, the results of experiments based on real-world data indicate that proposed algorithms can efficiently plan and schedule large-scale mobility-induced live container migrations in a complex network environment in a timely manner, while maintaining the QoS of migrating services. It can optimize the live migration performance and minimize the deadline violation in migration scheduling. 
As part of the future work, we intend to investigate base station clustering for EDC placement based on user mobility information to reduce the number of live migrations.

\section*{Acknowledgment}
The authors thank Shashikant Ilager and Mohammad Goudarzi for their valuable comments and suggestions.

\bibliographystyle{IEEEtran}
\bibliographystyle{abbrv}
\bibliography{ref-full}

\end{document}